\setlist[itemize]{align=parleft,left=1em..2em}
\setlist[enumerate]{align=parleft,left=1em..3em}
\g@addto@macro{\@algocf@init}{\SetKwInOut{Parameter}{Parameters}} 
\newtheorem{definition}{Definition}
\newtheorem{lemma}{Lemma}
\newtheorem{remark}{Remark}
\newtheorem{theorem}{Theorem}
\newcommand{\sign}{sign}
\newcommand{\exponent}{\mathsf{e}}
\newcommand{\significand}{\mathsf{d}}
\newcommand{\signbit}{\mathsf{b}}
\newcommand{\ie}[0]{i.e.,\xspace}
\newcommand{\eg}[0]{e.g.,\xspace}
\newcommand{\etal}[0]{et al.\xspace}
\newcommand{\ReLU}{\mathsf{ReLU}}
\newcommand{\clip}{\mathsf{clip}}
\newcommand{\mina}{\mathsf{min}}
\newcommand{\maxa}{\mathsf{max}}
\newcommand{\PDAW}{\mathsf{ReluPGD}}
\newcommand{\Perturb}{\mathsf{LinApproxPerturbDir}}
\newcommand{\Neighbor}{\mathsf{Neighbor}}
\newcommand{\leftp}{\mathsf{beforeFP}}
\newcommand{\rightp}{\mathsf{afterFP}}
\newcommand{\step}{s}
\newcommand{\sample}{\mathsf{sample}}
\newcommand{\neighb}{\mathsf{neighb}}
\newcommand{\candidates}{\mathsf{candidates}}
\gdef\@copyrightpermission{
 \begin{minipage}{0.3\columnwidth}
   \href{https://creativecommons.org/licenses/by-nd/4.0/}{\includegraphics[width=0.90\textwidth]{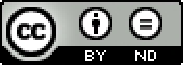}}
 \end{minipage}\hfill
 \begin{minipage}{0.7\columnwidth}
   \href{https://creativecommons.org/licenses/by-nd/4.0/}{This work is licensed under a Creative Commons Attribution-NoDerivs International 4.0 License.}
 \end{minipage}
 \vspace{5pt}
}
\begin{document}

\title[Getting a-Round Guarantees: Floating-Point Attacks on Certified Robustness]{Getting a-Round Guarantees: \\ Floating-Point Attacks on Certified Robustness}

\author{Jiankai Jin}
\orcid{0009-0009-1008-482X}
\affiliation{%
  \institution{The University of Melbourne}
  \city{Melbourne}
  \country{Australia}}
\email{jiankaij@student.unimelb.edu.au}

\author{Olga Ohrimenko}
\orcid{0000-0002-9735-0538}
\affiliation{%
  \institution{The University of Melbourne}
  \city{Melbourne}
  \country{Australia}}
\email{oohrimenko@unimelb.edu.au}
\authornote{Authors made equal contributions to this work}

\author{Benjamin~I.~P. Rubinstein}
\orcid{0000-0002-2947-6980}
\affiliation{%
  \institution{The University of Melbourne}
  \city{Melbourne}
  \country{Australia}}
\email{brubinstein@unimelb.edu.au}
\authornotemark[1]





\begin{abstract}
Adversarial examples pose a security risk as they can alter decisions of 
a machine learning classifier through slight input perturbations.
Certified robustness has been proposed as a mitigation where given an input $\mathbf{x}$, 
a classifier returns a prediction and a certified radius $R$
with a provable guarantee that any perturbation to $\mathbf{x}$ with $R$-bounded norm will not alter the classifier's prediction.
In this work, we show that these guarantees can be invalidated 
due to limitations of floating-point representation that cause rounding errors.
We design a rounding search method that can efficiently exploit 
this vulnerability to find adversarial examples against state-of-the-art certifications
in two threat models, that differ in how the norm of the perturbation is computed.
We show that the attack can be carried out against linear classifiers that have exact certifiable guarantees 
and against neural networks that have conservative certifications.
In the weak threat model, our experiments demonstrate attack success rates over 50\% on random linear classifiers, 
up to 23\% on the MNIST dataset for linear SVM, and up to 15\% for a neural network.
In the strong threat model, the success rates are lower but positive.
The floating-point errors exploited by our attacks can range from small to large (e.g., $10^{-13}$ to $10^{3}$) 
--- showing that even negligible errors can be systematically exploited to invalidate guarantees provided by certified robustness.
Finally, we propose a formal mitigation approach based on rounded interval arithmetic, 
encouraging future implementations of robustness certificates to account for 
limitations of modern computing architecture to provide sound certifiable guarantees.
\end{abstract}

\begin{CCSXML}
<ccs2012>
<concept>
<concept_id>10002978.10002986</concept_id>
<concept_desc>Security and privacy~Formal methods and theory of security</concept_desc>
<concept_significance>500</concept_significance>
</concept>
<concept>
<concept_id>10010147.10010257</concept_id>
<concept_desc>Computing methodologies~Machine learning</concept_desc>
<concept_significance>500</concept_significance>
</concept>
</ccs2012>
\end{CCSXML}

\ccsdesc[500]{Security and privacy~Formal methods and theory of security}
\ccsdesc[500]{Computing methodologies~Machine learning}

\keywords{Certified Robustness, Floating-Point Errors, Adversarial Examples}

\maketitle

\section{Introduction}

Robustness of modern image classifiers has come under scrutiny due to a plethora of results demonstrating adversarial examples---small
perturbations to benign inputs that cause models to mispredict, even when such perturbations are not evident to the human eye~\cite{madry2017towards,carlini2017towards,42503,43405}.
If a learned model is used in critical applications such as self-driving cars, clinical settings or malware detection, such easily added perturbations can have severe consequences.
As a result, research focus has shifted to training models robust to adversarial perturbations, that come endowed with \textit{certified robustness}.

Mechanisms for providing {robustness certification} aim to bound a model $f$'s {sensitivity to} a certain level of perturbation.
At a high level, such mechanisms return a radius $R$ around a test input $\mathbf{x}$ 
with a guarantee that for any $\mathbf{x}'$ within $R$ distance from $\mathbf{x}$, $f(\mathbf{x}) = f(\mathbf{x}')$.
How $R$ is computed, whether it is sound and/or complete depends on the mechanism.
For example, bound propagation~\cite{crown,beta-crown} transfers the upper and lower bounds from the output layer to the input layer of a neural network,
and gives a lower bound on the perturbation needed to flip the classification.

Given the extensive research on certified robustness, can such mechanisms protect against adversarial examples in practice?
In this paper, we show that the limits posed by floating-point arithmetic invalidate guarantees of several prominent mechanisms and their implementations.
Despite proofs of robustness guarantees, they mostly assume real numbers can be represented exactly.
Unfortunately, this critical (implicit) assumption cannot hold on computers with finite number representations.
Since floating-point (FP) numbers can represent only a subset of real values, rounding is likely to occur
when computing robust guarantees and can cause overestimation of the certified radius~$R$. 
Thus, adversarial examples may exist within the computed radius despite claims of certification.

We devise a rounding search method that can efficiently discover such adversarial examples 
in two threat models, that differ in how the norm of the perturbation is computed.
Our method is inspired by the traditional adversarial example search methods 
such as PGD~\cite{madry2017towards} and C\&W~\cite{carlini2017towards}.
However, we find that such existing methods do not effectively exploit 
the rounding of a certified radius as the search space they explore is large 
(\ie the number of examples to check becomes intractable due to the large number of floating-point values) 
and instances of inappropriate rounding do not necessarily follow model gradients.
To this end, our method is different from these search methods in two aspects:
(1) instead of relying on back propagation, it leverages the piecewise linear property of {ReLU networks to find coarse-level} perturbation directions; 
(2) it then searches in a much finer scale by sampling floating-point neighbors of a potential adversarial example.
The first aspect allows us to narrow down the search space closer to the certified radius 
and efficiently find adversarial examples.
The second aspect enables our search method to find adversarial examples with perturbation norms 
that are just smaller than the certified radius (\eg in the 13th decimal place), which PGD and C\&W cannot find.
{Compared to other works that find robustness violations~\cite{jia2021exploiting,zombori2020fooling}, our attack
method is arguably stronger as it works on unmodified target models with unaltered instances as opposed to specially crafted models or instances.}
We discuss the potential impact of our attacks on robustness guarantees in Section~\ref{app:limit_disclosure}.

One's first intuition to mitigate the overestimation of certified radii exploited by the above attacks might be to adopt slightly more conservative radii
(\eg using $R-\gamma$ for some positive constant $\gamma$, \eg $\gamma=1.0$). Unfortunately, such radii are not in general sound and choosing $\gamma$ is inherently error prone. 
That is, we show that the amount of overestimation can depend on the data (\eg number of features) 
and model (\eg number of operations) and that attacking $R-1.0$ is still possible.
To this end, we propose a defense based on rounded interval arithmetic that has soundness guarantees and can be easily integrated into mechanisms for computing certified radii.
In summary our contributions are:
\begin{itemize}
\item We explore a class of attacks that invalidate the implementations of certified robustness 
(\ie find adversarial examples within the certified radius). 
Our attacks exploit rounding errors due to limited floating-point representation of real numbers.
\item We devise a rounding search method that systematically discovers such adversarial examples under two threat models.
The weak model assumes that attacks need only have floating-point norms that violate certifications 
(\eg in the case where the norm is computed using common software libraries). 
The strong model makes no such assumption: the true (real-valued) norm of attacks must violate certifications 
(\eg in the case where the library that computes the square root for the norm can represent a real value or its range).
\item We show that our attacks work against exact certifications of linear models~\cite{cohen2019certified}, and
against a conservative certified radius returned by a prominent neural network verifier~\cite{beta-crown} on a network.
Our attack success rate differs between learners and threat models. 
In the weak threat model, our success rates are over 50\% for random linear classifiers and 15\% on an MNIST neural network model.
In the strong threat model, the attack success rates are lower but are still non-zero.
For all cases, in theory, the certification should guarantee a 0\% success rate
for such attacks within certified radii.
\item We propose a defense based on rounded interval arithmetic, with strong theoretical 
and empirical support for mitigating rounding search attacks.
\end{itemize}

\section{Background and preliminaries}
\label{sec:bg}
Let input instance $\mathbf{x} = (x_1, x_2, \ldots, x_D)$ 
be a vector in $\mathbb{R}^D$ with $x_i$ denoting the $i$th component of $\mathbf{x}$.
We adopt the $\ell_2$ norm $\|\mathbf{x}\|_2 = \left( \sum_{i=1}^D |x_i|^2\right)^{1/2}$, 
written $\|\mathbf{x}\|$ where the norm is understood from context, 
when measuring the length of instance vectors, and its induced metric $\|\mathbf{x} - \mathbf{z}\|$
for distances. We also use the $\ell_1$ norm $\|\mathbf{w}\|_1=\sum_{i=1}^D |w_{i}|$.
We consider the task of learning a classifier $f$ mapping an instance in $\mathbb{R}^D$
to a label in $\{-1,1\}$ or to a $K$-class label in $[K]=\{1,\ldots,K\}$. 
Given a training set of $N$ examples $(\mathbf{x}^{(j)}, y^{(j)}) \in \mathbb{R}^D\times \{-1,1\}$ drawn i.i.d. from some unknown
distribution $P$, the goal of a learner $\mathcal{A}$ is to output some classifier $f$ with low risk $\mathbb{E}_P[\ell(f(\mathbf{X}),Y)]$ for some loss
$\ell(\cdot)$ of interest.

\paragraph{Linear models.}
We consider learners $\mathcal{A}$ over two-class linear classifiers of the form $f(\mathbf{x})=\sign(\mathbf{w}^T\mathbf{x}+b)$, 
for parameters $\mathbf{w}\in\mathbb{R}^D$ the model weights and $b\in\mathbb{R}$ model bias. Specifically we consider as learners:
(1) uniformly random sampling of $(\mathbf{w},b)$ within a closed, bounded set, without consideration of training data;
(2) the linear support vector machine that minimizes the $\ell_1$-regularized squared hinge loss 
$\sum_{i=1}^n (1-y^{(i)} (\mathbf{w}^T \mathbf{x}^{(i)} + b))^2_+ + \lambda\|\mathbf{w}\|_1$.
Classifiers resulting from learners (1) and (2) directly make predictions as $f(\mathbf{x})=\sign(\mathbf{w}^T\mathbf{x}+b)$.

\paragraph{Neural networks.}
We consider fully-connected feed-forward neural networks, which are known to be universal approximators~\cite{hornik1991approximation}. 
Such a network forms output scores by composing a sequence of layers $F_i(\cdot)$ as: 
$F(\mathbf{x}) = \left(F_n \circ F_{n-1} \circ \cdots \circ F_1\right)(\mathbf{x})$. 
The neural network forms its classification within classes $[K]=\{1,\ldots,K\}$ by simple majority vote $f(x)=\arg\max_{k\in [K]} (F(\mathbf{x}))_k$. 
Layers might take the form $F_i(\mathbf{x})=\sigma(\boldsymbol{\theta}_i^T \mathbf{x}+\hat{\boldsymbol{\theta}}_i)$ 
for some weights~$\boldsymbol{\theta}_i$ and vector of model biases~$\hat{\boldsymbol{\theta}}_i$.
For activation function $\sigma$ we consider either linear $\sigma(x)=x$ 
or rectified linear units (ReLUs) $\sigma(x)=\ReLU(x)=\max\{0,x\}$. 
Gradients of neural network outputs can be computed with respect to weights (for training) 
or inputs (for finding adversarial examples---see next) using backpropagation as typically implemented by reverse accumulation in autodiff libraries.

\paragraph{Adversarial examples.}
Given an input instance $\mathbf{x}$, a classifier $f$, and a target label $t \neq f(\mathbf{x})$, 
$\mathbf{x}'$ is a \textit{targeted} adversarial example~\cite{42503} if $f(\mathbf{x}')=t$
where $\mathbf{x}'$ is reachable from $\mathbf{x}$ according to some chosen threat model. 
In the vision domain, it is common to assume that small $\ell_p$ perturbations to $\mathbf{x}$ will go unnoticed by human observers. 
In this paper we consider $\ell_2$ distance, \ie $\|{\mathbf{x}-\mathbf{x}'}\| \le \Delta$ for some small perturbation limit~$\Delta$. 
An adversarial example in the multi-class setting is \textit{untargeted} if $t$ is not specified, 
however, it is possible to find an $\mathbf{x}'$ such that $f(\mathbf{x}') \neq f(\mathbf{x})$ while $\mathbf{x}'$ and $\mathbf{x}$ are still close.
Appropriate methods for finding adversarial examples depend on the model under attack. 
For a two-class linear classifier $f$, 
its weight vector $\mathbf{w}$ is normal to the decision boundary $\mathbf{w}^T\mathbf{x}+b=0$. 
Therefore, there always exists a closest adversarial example $\mathbf{x}'$ just beyond the decision boundary in the direction $\mathbf{w}$.
See Figure~\ref{fig:direction} for an illustration. 
For neural networks, and non-linear classifiers in general, the process is more involved.
Two popular white-box approaches are due to Carlini and Wagner (C\&W)~\cite{carlini2017towards}
and Madry~\etal called Projected Gradient Descent (PGD)~\cite{madry2017towards}.
These methods assume white-box access to the models (that is, access to architectures and model weights)
and hence an attacker can run as many queries as they want and observe intermediate gradients on inputs of their choice.
At a high level, PGD and C\&W view a search for an adversarial example as an optimization problem where one tries to find~$\mathbf{x}'$
with a flipped label while minimizing the distance to $\mathbf{x}$,
and differ in how the perturbation norm $\Delta$ is updated during the search.

\paragraph{Floating-point representation.}
Floating-point values represent reals using three binary numbers:
a sign bit $\signbit$, an exponent $\exponent$,  and a significand $d_1d_2\ldots d_{\significand}$.
For example, 64-bit (double precision) floating-point numbers allocate 1 bit for $\signbit$, 11 bits for $\exponent$,
and 52 bits for the significand. Such a floating-point number is defined to be 
$(-1)^\signbit \times (1.d_1d_2\ldots d_{\significand})_2  \times 2^{\exponent-1023}$.
Floating points can represent only a finite number of real
values. Hence, computations involving floating-point numbers often need to be rounded up or down
to their nearest floating-point representation, as specified by the IEEE 754 Standard for Floating-Point Arithmetic~\cite{IEEE-FP}.

\subsection{Certified robustness}
\label{sec:bqcert}
A robustness certification for a classifier at input $\mathbf{x}$ is a neighborhood (typically {an $\ell_2$ ball}) of $\mathbf{x}$ on 
which classifier predictions are constant. Certifications aim to 
guarantee that \emph{no perturbed adversarial examples exist in this neighborhood}, including ``slightly'' perturbed instances. 

\begin{definition}\label{def:certifications}
    A \emph{pointwise robustness certification} for a $K$-class classifier $f$ at input $\mathbf{x}\in\mathbb{R}^D$
    is a real radius $R> 0$ that is sound and (optionally) complete:
    \begin{enumerate}
        \item[(i)] \emph{[sound]} $\forall \mathbf{x}'\in\mathbb{R}^D, \|\mathbf{x}'-\mathbf{x}\|\leq R \Rightarrow f(\mathbf{x}')=f(\mathbf{x})$. 
        \item[(ii)] \emph{[complete]} $\forall R'>R, \exists \mathbf{x}'\in\mathbb{R}^D, \|\mathbf{x}'-\mathbf{x}\|\leq R' \wedge f(\mathbf{x}')\neq f(\mathbf{x})$.
    \end{enumerate}
\end{definition}

For a given certification mechanism, we will distinguish the idealized certification radius $R$ (\ie the mapping of Definition~\ref{def:certifications} under the soundness condition)
from a candidate radius $\tilde{R}$ that an implementation of this mechanism computes. As we will see, the latter may not be necessarily sound (or complete).

We categorize certification mechanisms in the literature depending on their specific claims, summarized as follows:\\[0.5em]

\noindent \begin{tabular}{p{0.95\columnwidth}}
    \hline
    \textbf{Exact mechanisms}: claim to output sound and complete radii. \\[0.33em]
    \textbf{Conservative mechanisms}: claim to output sound radii that are not necessarily complete. \\[0.33em]
    \textbf{Approximate mechanisms}: claim to output random radii that are sound (or abstain), with high probability $1-\alpha$, and that are not necessarily complete. \\
    \hline
\end{tabular}\\[0.5em]

We next go into more detail on each category of certification mechanism.

\paragraph{Exact certification mechanisms.} 
These mechanisms output sound and complete radii under ideal realization of $\mathbb{R}$ arithmetic.  
Binary linear classifiers $f(\mathbf{x})=\sign(\mathbf{w}^T\mathbf{x}+b)$ admit an exact certified radius 
$R=|\mathbf{w}^T\mathbf{x}+b|/\|\mathbf{w}\|$. Cohen \etal~\cite{cohen2019certified} derive this radius and prove its 
soundness \cite[Proposition~4]{cohen2019certified}
and completeness \cite[Proposition~5]{cohen2019certified} for real arithmetic.

\paragraph{Conservative certification mechanisms.}
These are mechanisms that output radii that are sound and not necessarily complete under real-valued arithmetic.
These mechanisms usually view neural network certification as an optimization problem: for input $\mathbf{x}$ with $f(\mathbf{x})>0$, and candidate radius $R$, let
\begin{eqnarray*}
f^{\star} &=& \min_{\mathbf{x}'\in \mathbb{R}^D} f(\mathbf{x}') \\
\text{s.t.} & & \|\mathbf{x}'-\mathbf{x}\|\leq R\enspace.
\end{eqnarray*}
If $f^{\star} > 0$, then $R$ can be certified for $f$ on $\mathbf{x}$.
In general, this is a non-convex problem due to the non-linear activations used in typical choices of $f$.
Hence, conservative certification mechanisms usually relax the non-convexity of neural networks to estimate a tractable
lower bound $\underline{f} \le f^{\star}$, \eg approximating non-linear activations with linear functions~\cite{crown,beta-crown,ehlers2017formal,bunel2020branch,tjeng2017evaluating}. 
If $\underline{f} > 0$, then $f^{\star} > 0$ and $f$ is certified.
Bound propagation~\cite{crown,beta-crown} aims to estimate $\underline{f}$ to a certain input, 
by maintaining at each layer an outer approximation of the set of activation functions 
and propagating bounds from the output layer to the input layer.
Linear programming~\cite{ehlers2017formal,bunel2020branch,tjeng2017evaluating} is another approach that also relaxes non-convex constraints into linear constraints,
and solves the certification problem as a linear programming problem.
We used $\beta$-Crown and Gurobi MIP solver as example applications of bound propagation and linear programming respectively in Section~\ref{sec:crown}.

\paragraph{Approximate mechanisms.} 
Approximate certifications output random radii that under $\mathbb{R}$ are sound (or abstain), 
with high probability $1-\alpha$. Such mechanisms are not necessarily complete.
Researchers have studied approximate certification of non-linear models such as neural networks, 
under the $\ell_2$ norm~\cite{augmentation,li2019certified,cohen2019certified}.
In this work, we consider the randomized smoothing approach of Cohen \etal~\cite{cohen2019certified}.
Sufficiently stable model predictions lead to certifiable radii;
therefore we seek to stabilize the outputs of the base classifier $f$ by forming a smoothed classifier $g$ as follows.
To input $\mathbf{x}\in\mathbb{R}^D$ add isotropic Gaussian noise, then apply $f$. The input distribution induces a distribution over predictions. Smoothed classifier $g$ then
outputs the most likely class under this induced distribution. That is, for 
\begin{eqnarray*}
\boldsymbol{\epsilon} &\sim& \mathcal{N}(0, \sigma_{P}^2 I^2)\enspace, \\
g(\mathbf{x}) &=& \arg\max_{k\in [K]} \Pr\left(f(\mathbf{x}+\boldsymbol{\epsilon}) = k\right)\enspace.
\end{eqnarray*}
Consider lower (upper) bounds on the winning $k^\star$ classification's probability score, 
$$\Pr(f(\mathbf{x}+\boldsymbol{\epsilon}) = k^\star)\geq \underline{p_A} \geq \overline{p_B} \geq \max_{k\in[K]\backslash\{k^\star\}}\Pr(f(\mathbf{x}+\boldsymbol{\epsilon}=k)).$$ 
Then Cohen \etal~\cite{cohen2019certified} prove that radius $R=0.5\sigma_P (\Phi^{-1}(\underline{p_A})-\Phi^{-1}(\overline{p_B}))$ 
is a sound certification for smoothed classification $g(\mathbf{x})$, under real-valued arithmetic. 
However the floating-point calculation of a corresponding $\tilde{R}$ is likely to experience some degree of rounding, 
potentially sufficient to erroneously certify some $\tilde{R}>R$.

\section{Rounding search attack}
\label{sec:attack}
We now present a rounding search method that exploits floating-point rounding errors to
find adversarial examples within a computed certified radius $\tilde{R}$.

\paragraph{Threat model.}
Like prior works on adversarial examples~\cite{carlini2017towards, madry2017towards, jia2021exploiting}, 
we assume that the adversary has white-box access to a classifier $f$,
and has white-box access to the certification mechanism that 
it can query with inputs $f$ and instance $\mathbf{x}\in\mathbb{R}^D$, 
and obtain a certified radius~$\tilde{R}$ as an output.

Since there are floating-point rounding errors in the operations for computing a certification,
the computed radius $\tilde{R}$ at an instance $\mathbf{x}$ could overestimate an intended sound (and possibly complete) radius $R<\tilde{R}$.
This creates leeway for an adversary to find adversarial
perturbations whose norms are less than or equal to the computed certified radius, but which can
change the classifications of the model, \emph{invalidating soundness of the computed certification}.
Our work aims to find a systematic and efficient way to exploit these rounding errors.

A perturbation's norm $\|\boldsymbol{\delta}\| = \|\mathbf{x}'-\mathbf{x}\|$ 
must be estimated when evaluating the perturbation's success.
This norm computation can also suffer floating-point rounding errors, and could be underestimated.
To handle this possibility, we conduct attacks in two threat models, one weak one strong.

The weak model makes minimal assumptions on how certification is violated:
an attack is ruled successful if the floating-point computation of $\|\boldsymbol{\delta}\|$
is smaller than or equal to $\tilde{R}$ (\ie $\|\boldsymbol{\delta}\| \le \tilde{R}$).
We note that this model represents settings where the norm is computed using
common software libraries for computing operations on floating numbers 
(\eg Numpy's 32-bit or 64-bit floating-point arithmetic).

The strong model does not make these assumptions, the true (real-valued) norm of attacks must violate certifications.
This model considers a setting where the norm is computed using software packages that instead of
returning a result that is potentially rounded can return a representation of a real-valued norm or its range.
Since we cannot do real arithmetic on machines, we use the upper bound of the norm $\overline{\|\boldsymbol{\delta}\|}$ instead,
which is computed with rounded interval arithmetic and is guaranteed to be greater than or equal to the true norm.
That is, a successful attack satisfies $\overline{\|\boldsymbol{\delta}\|} \le \tilde{R}$, 
which guarantees that $\tilde{R}$ must necessarily exceed the real-valued norm of the attack.

\begin{figure}[h]
\includegraphics[width=0.4\textwidth]{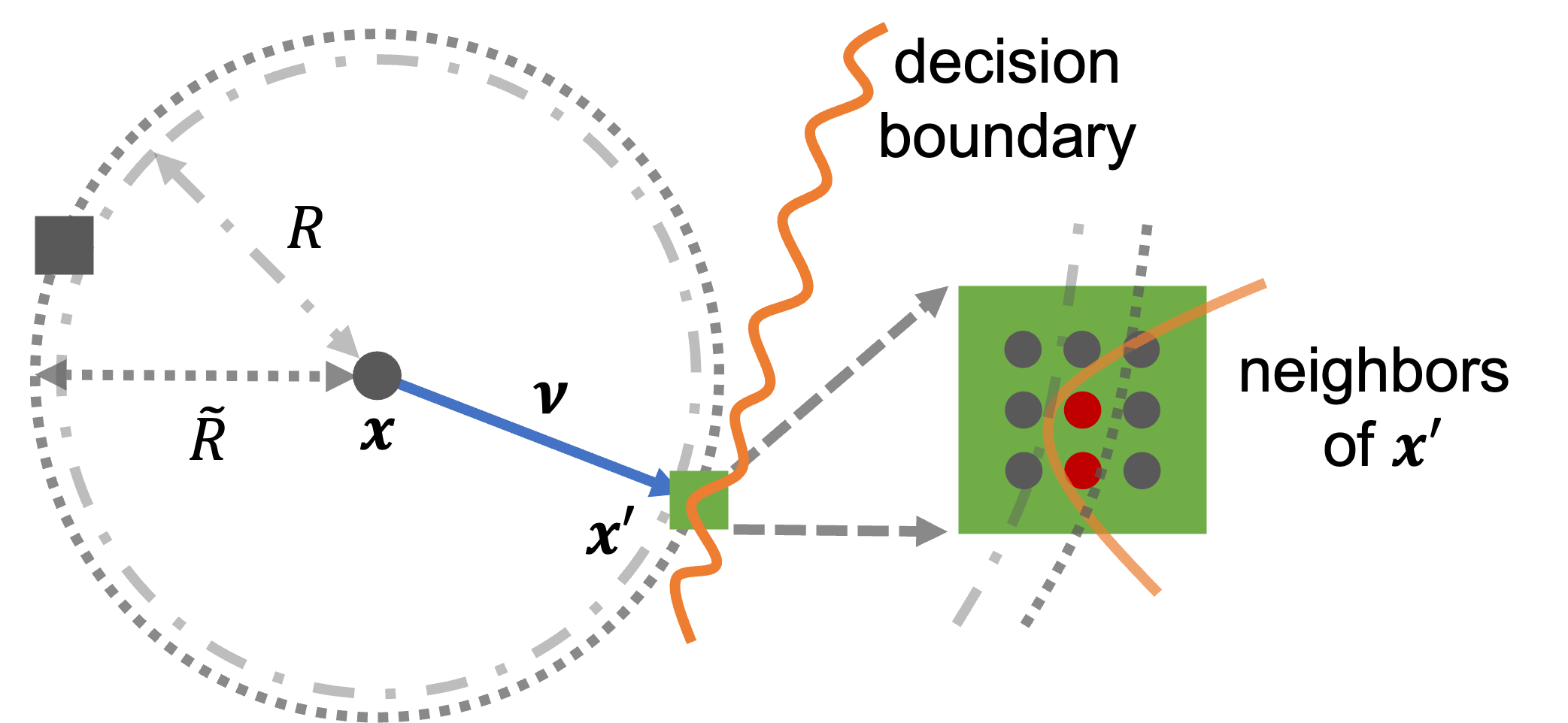}
\centering
\caption{
  The search direction $\boldsymbol{\nu}$ (blue line) 
  and search area (green area) for finding adversarial examples against a model, 
  whose decision boundary is the orange line.
  $\mathbf{x}$ is the original instance,
  $\tilde{R}$ and $R$ are the computed and real-valued certified radii of the model on $\mathbf{x}$,
  $\boldsymbol{\delta}=\tilde{R}\boldsymbol{\nu} / \|\boldsymbol{\nu}\|$ 
  is the adversarial perturbation in the search direction $\boldsymbol{\nu}$, instance 
  $\mathbf{x'}=\mathbf{x}+\boldsymbol{\delta}$ is the seed for the green search area.
  Our rounding search method will sample $N$ floating-point neighbors $\boldsymbol{\delta'}$ of $\boldsymbol{\delta}$,
  and evaluate each $\mathbf{x}+\boldsymbol{\delta'}$ to check if any one of them can 
  flip the classification of the model with $\|\boldsymbol{\delta'}\| \le \tilde{R}$ 
  or $\overline{\|\boldsymbol{\delta}'\|}\le\tilde{R}$ (the red points in the green search area).
} 
\label{fig:direction}
\end{figure}

\paragraph{Attack overview.}
Consider a classifier~$f$, input~$\mathbf{x}$ and the corresponding computed radius~$\tilde{R}$.
A na\"ive way to search for an adversarial example would be
to try all~$\mathbf{x}'$ such that $\|\mathbf{x} - \mathbf{x}'\| \le \tilde{R}$, 
checking whether~$f(\mathbf{x}) \neq f(\mathbf{x}')$. Unfortunately this exhaustive search
is computationally intractable (\eg there are $\approx 2^{17}$ floating points in a small interval such as $[10,10+2^{-32}]$). 
We can avoid {some} futile search. For example, observe that instances in the gray area, 
as depicted in Figure~\ref{fig:direction}, {are unlikely to} flip predictions, as they are in the opposite direction of the decision boundary.
A key idea is to find a perturbation direction~$\boldsymbol{\nu}$ that reaches the decision boundary in the shortest distance, 
and add a perturbation $\boldsymbol{\delta}$ in that direction to~$\mathbf{x}$,
to maximize our chance to flip the classifier's prediction 
with perturbation norm $\|\boldsymbol{\delta}\|$ (or $\overline{\|\boldsymbol{\delta}\|}$) less than or equal to $\tilde{R}$.
This baseline method has several challenges.
First, computation of perturbation direction $\boldsymbol{\nu}$ is not easy for NNs which do not typically have linear decision boundaries. 
To this end, for ReLU networks, we find a local linear approximation prior to computing the gradient for $\boldsymbol{\nu}$. 
In general one may use a standard approach to finding adversarial example \emph{directions} such as PGD.
Second, while $\boldsymbol{\nu}$ guides a search towards the decision boundary, the search may still be unable to exploit the leeway between  
the real certified radius $R$ and the computed certified radius $\tilde{R}$ to find certification violations. 
We address this challenge with a tightly-confined randomized floating-point neighborhood search.

In summary, our attack proceeds as follows (depicted in Figure~\ref{fig:direction}).

\begin{enumerate}
\item Find an adversarial perturbation direction~$\boldsymbol{\nu}$ that reaches the decision boundary of classifier~$f$
in the shortest distance, as a form of PGD attack~\cite{madry2017towards} (Section~\ref{sec:direction}).  
\item Compute perturbation $\boldsymbol{\delta}$ in the direction $\boldsymbol{\nu}$ within the computed certified radius $\tilde{R}$:
\begin{equation}
  \label{eq:scale}
  \boldsymbol{\delta}=\tilde{R}\boldsymbol{\nu} / \|\boldsymbol{\nu}\|\enspace.
\end{equation}
\item Search for multiple floating-point neighbors $\boldsymbol{\delta}'$ of $\boldsymbol{\delta}$ 
with $\|\boldsymbol{\delta}'\|\le\tilde{R}$ (or $\overline{\|\boldsymbol{\delta}'\|}\le\tilde{R}$),
and evaluate if any $\mathbf{x}+\boldsymbol{\delta}'$ can flip the classifier's prediction (Section~\ref{sec:fpsearch}).
\end{enumerate}

\subsection{Adversarial perturbation direction}
\label{sec:direction}

For linear models, direction $\boldsymbol{\nu}$ is a normal to 
the decision boundary's hyperplane $\mathbf{w}^T\mathbf{x}+b=0$ and equals $\mathbf{w}$.
The perturbation direction for neural networks is not as obvious as it is for linear models, 
as the decision boundary can be highly non-linear.
In general one can adopt a PGD-like approach for finding the attack \emph{direction}.
In the rest of this section we describe this approach for finding $\boldsymbol{\nu}$ 
for neural networks with ReLU activations that we show to be effective in our experiments. 
This is an \emph{exact} instantiation of PGD. The novelty of our attack comes later when we set the attack step size and incorporate a local floating-point search.
A neural network with ReLUs can be represented as 
$$F(\mathbf{x}) = \left(F_n \circ F_{n-1} \circ \cdots \circ F_1\right) (\mathbf{x})$$
where $F_i(\mathbf{x})=\ReLU(\boldsymbol{\theta}_i^T \mathbf{x}+\hat{\boldsymbol{\theta}}_i)$. 
Here $\mathbf{x}$ and $\hat{\boldsymbol{\theta}}_i$ are vectors, $\boldsymbol{\theta}_i$ is a matrix, 
and the rectified linear (ReLU) {activation} function acts pointwise {on a vector, returning a vector}.

We use the fact that such networks are piecewise linear: 
therefore a (local) linear approximation at instance $\mathbf{x}$ is in fact \textit{exact}.
Then, one can find an adversarial example for $\mathbf{x}$ against this linear model as described above
and use it to attack the original ReLU network.

\paragraph{Warmup.}
{As a warmup, let us consider a network where ReLUs are all activated.}
{For each node} $\ReLU(z)=\max\{0,z\}=z$, {and so the} network is a combination of $K$ linear models where
$K$ is the number of classes. That is, 
\[F(\mathbf{x}) = \boldsymbol{\theta}^T \mathbf{x}+\hat{\boldsymbol{\theta}}\enspace,\]
where $\boldsymbol{\theta}^T = \boldsymbol{\theta}_n^T \boldsymbol{\theta}_{n-1}^T \cdots \boldsymbol{\theta}_1^T$, 
and $\hat{\boldsymbol{\theta}} = \sum_{i=1}^n \left(\prod_{j=i+1}^n \boldsymbol{\theta}_j^T\right)\hat{\boldsymbol{\theta}}_i$.
Note that $\boldsymbol{\theta}^T$ is a $K\times D$ matrix and~$\hat{\boldsymbol{\theta}}$ is a column vector of length~$K$.
Each class $k$ corresponds to the linear model
\[F^k(\mathbf{x}) = \mathbf{w}_k^T \mathbf{x}+b_k\enspace,\]
where $\mathbf{w}_k^T=(\boldsymbol{\theta}^T)_{k,\cdot}$ is the $k$th row of $\boldsymbol{\theta}^T$ and ${b}_k = \hat{\theta}_k$.

In order to change this model's classification from the original class $l$ 
to the target class $t\neq l$, we observe that one can attack the following model:
\[ L(\mathbf{x}) = F^t(\mathbf{x}) - F^l(\mathbf{x}) = (\mathbf{w}_t^T - \mathbf{w}_l^T) \mathbf{x} + {b}_t - {b}_l\enspace.\]
This is a linear model, and $L(\mathbf{x}) < 0$ when $F(\mathbf{x})$ classifies $\mathbf{x}$ as~$l$,
$L(\mathbf{x}) > 0$ when $F(\mathbf{x})$ classifies $\mathbf{x}$ as $t$, so $L(\mathbf{x})$ has the decision boundary hyperplane $L(\mathbf{x}) = 0$.
Hence, the most effective perturbation direction to change classification of $F(\mathbf{x})$ from $l$ to $t$, 
as before for linear models, is~$\boldsymbol{\nu} = \mathbf{w}_t^T - \mathbf{w}_l^T$, 
which is the gradient of $L(\mathbf{x})$ with respect to $\mathbf{x}$.

\paragraph{Linear approximation of ReLU networks.}

ReLUs will all be activated when the weights and biases of each hidden layer 
of the network are positive, and all values of the input are also positive 
(\eg an image, whose pixel value is usually in the range $[0,1]$).
However, in practice this usually is not the case and some ReLUs will not be activated.
For inactive ReLUs, we modify outgoing weights to zero in the calculation of the perturbation direction $\boldsymbol{\nu}$.

The overall process,~$\Perturb$, is described in~Algorithm~\ref{alg:perturb}. 
It proceeds by first finding an exact (local) linear approximation 
$F'(\mathbf{x}) = \boldsymbol{\tau}^T \mathbf{x} + \hat{\boldsymbol{\tau}}$ 
where $\hat{\boldsymbol{\tau}} \gets \sum_{i=1}^n \left(\prod_{j=i+1}^n \boldsymbol{\tau}_j^T\right)\hat{\boldsymbol{\theta}}_i$ using the notation in the pseudo-code.
The weights of~$F'$ are equal to weights of $F$ for internal nodes where~$F(\mathbf{x})$ activated the corresponding ReLUs,
otherwise they are set to 0. Specifically, we zero out columns of matrix $\boldsymbol{\theta}_i^T$ 
when the corresponding elements of mask $\mathbf{m}_i$ are zero. Given these weights, $\Perturb$ computes~$\boldsymbol{\nu}$ as explained in the warmup.
This direction corresponds to a gradient of the network's target minus current class scores, with respect to the instance $\mathbf{x}$.

\DontPrintSemicolon
\begin{algorithm}[h]
  \caption{$\Perturb$: Linearized ReLU Network Gradients}\label{alg:perturb}
  \KwIn{
  input to be perturbed $\mathbf{x}$;
  the neural network model $F(\mathbf{x}) =  F_n \circ F_{n-1} \circ \cdots \circ F_1(\mathbf{x})$,
  $F_i(\mathbf{x})=\ReLU(\boldsymbol{\theta}_i^T \mathbf{x}+\hat{\boldsymbol{\theta}}_i)$;
  current label $l$; adversarial target label $t$.
  }
  \KwOut{$\boldsymbol{\nu}$, a perturbation direction.}

  \SetKwFunction{FMain}{$\Perturb$}
      \SetKwProg{Fn}{Function}{:}{}
      \Fn{\FMain{$\mathbf{x}$, $F$, $l$, $t$}}{
        $\mathbf{h}_1 \gets \boldsymbol{\theta}_1^T \mathbf{x}+\hat{\boldsymbol{\theta}}_1$\;
        \For{$i\leftarrow 2$ \KwTo $n$}{
          $\mathbf{m}_i \gets  \mathbbm{1}_{[\mathbf{h}_{i-1} > 0]}$ \quad\Comment{{elementwise thresholding}}
          $\boldsymbol{\tau}_i^T \gets \boldsymbol{\theta}_i^T \odot \mathbf{1}\mathbf{m}_i^T$ \quad\Comment{{Hadamard product; $\mathbf{1}$ is a column vector of 1s}}
          $\mathbf{z}_{i-1} \gets \ReLU(\mathbf{h}_{i-1})$\; 
          $\mathbf{h}_i \gets \boldsymbol{\theta}_i^T \mathbf{z}_{i-1}+\hat{\boldsymbol{\theta}}_i$\;
        }
        $\boldsymbol{\tau}^T \gets \boldsymbol{\tau}_n^T \boldsymbol{\tau}_{n-1}^T \cdots \boldsymbol{\tau}_2^T \boldsymbol{\theta}_1^T$\qquad\Comment{weights of $F'$}
      
        $\boldsymbol{\nu} \gets \boldsymbol{\tau}^T[t] - \boldsymbol{\tau}^T[l]$\;
        
        \textbf{return} $\boldsymbol{\nu}$
  }
  \textbf{End Function}
\end{algorithm}

\paragraph{Projected gradient descent for ReLU networks.}

Given $\boldsymbol{\nu}$ as output by Algorithm~\ref{alg:perturb} and a computed certified radius, 
one could compute adversarial perturbation $\boldsymbol{\delta}$ in direction $\boldsymbol{\nu}$ close to the certified radius as in Equation~\ref{eq:scale}.
However, the resulting $\mathbf{x}'=\mathbf{x}+\boldsymbol{\delta}$ may activate different ReLUs of~$F$ than~$\mathbf{x}$. 
Hence, the linear approximation $F'$ on~$\mathbf{x}'$ may be different to~$F'$ on~$\mathbf{x}$: these approximations are only exact in local neighborhoods.
To this end we perform a search by iteratively updating~$\mathbf{x}'$ and 
invoking $\Perturb$ until an adversarial example within the input domain $[V_\mina, V_\maxa]$ is found or the procedure times out.
Algorithm~\ref{alg:weight} describes this procedure, which we refer to as $\PDAW$.
The algorithm iteratively performs the following: computes the gradient of the network's linearization at the current iteration, rescales to the step size $s$, 
clips the perturbation to the domain constraint, applies the perturbation.

\SetKwRepeat{Do}{do}{while}
\begin{algorithm}[h]
  \caption{$\PDAW$: Linearized Projected Gradient Descent for ReLU NNs}\label{alg:weight}
  \KwIn{
  input to be perturbed $\mathbf{x}$;
  the neural network model $F(\mathbf{x}) =  F_n \circ F_{n-1} \circ \cdots \circ F_1$,
  $F_i(\mathbf{x})=\ReLU(\boldsymbol{\theta}_i^T \mathbf{x}+\hat{\boldsymbol{\theta}}_i)$;
  current label $l$; adversarial target label $t$;
  step size $s$;
  input domain $[V_{\mina}, V_{\maxa}]$.
  }
  \KwOut{$\boldsymbol{\delta}$, adversarial perturbation.}

  \SetKwFunction{FMain}{$\PDAW$}
      \SetKwProg{Fn}{Function}{:}{}
      \Fn{\FMain{$\mathbf{x}, F, l, t, \step, V_{\mina}, V_{\maxa}$}}{
       $\mathbf{x}' \gets \mathbf{x}$ \quad\Comment{initial adversarial example}
        \Do{$F(\mathbf{x}') \neq t$\quad\mbox{\Comment{or till timeout}}}{
          $\boldsymbol{\nu} \gets \Perturb(\mathbf{x}', F, l, t)$\;
          $\boldsymbol{\delta}\gets\frac{\step}{\|\boldsymbol{\nu\|}}\boldsymbol{\nu}$\;
          $\mathbf{x}' \gets \mathbf{x}' + \boldsymbol{\delta}$\;
          $\mathbf{x}' \gets \clip(\mathbf{x}', V_{\mina}, V_{\maxa})$
        }
        \textbf{return} {$\boldsymbol{\delta} \gets \mathbf{x}' - \mathbf{x}$}
      }
  \textbf{End Function}
\end{algorithm}

\begin{remark}
Note that $\tilde{R}$ may not be given, as is the case for some network verifiers that instead of returning~$\tilde{R}$,
take $F$, $\mathbf{x}$ and some $R$ as input and either certify $R$ or not.
In this case, we need to search for the smallest perturbation in the direction of $\boldsymbol{\nu}$ to find such an $R$ to attack.
Hence, in Algorithm~\ref{alg:weight} we use $\step$ as an input,
which is set to a small initial value (\eg $10^{-5}$ in our experiments) so that $\boldsymbol{\nu}$ can be updated frequently.
If a $\tilde{R}$ is given, we can set it as a threshold value to stop the algorithm, that is,
the algorithm should stop when the total perturbation norm reaches $\tilde{R}$.
\end{remark}

\subsection{Rounding search}
\label{sec:fpsearch}

Given the direction $\boldsymbol{\nu}$ and the computed certified radius $\tilde{R}$, 
an adversarial perturbation $\boldsymbol{\delta}$ can be computed using Equation~\ref{eq:scale},
and $\mathbf{x}'=\mathbf{x} + \boldsymbol{\delta}$ should give an adversarial example so that $F(\mathbf{x}) \neq F(\mathbf{x}')$.

If the accumulated rounding errors are large, $\boldsymbol{\delta}$ can be sufficient to conduct a successful attack (\eg for neural networks with many neurons).
For some attacks, the rounding errors we exploit are much smaller, such as linear models with fewer operations.
Hence, we create $N$ floating-point neighbors of~$\boldsymbol{\delta}$ 
{to explore more possibilities of robustness violations close to the decision boundary due to rounding errors}.
At a high level, each neighbor $\boldsymbol{\delta'}$ is constructed by using $\boldsymbol{\delta}$ as a seed and then, for each dimension, replacing the original value
with a {neighboring} floating point that is either larger or smaller than it.
For example, a neighbor of $[1.0, 1.0]$ can be $[0.9999999999999999, 1.0000000000000002]$.
We provide the pseudo-code of {the neighbors sampling} procedure in~Algorithm~\ref{alg:neighbor}. 
We call this algorithm $\Neighbor$.
The result is a set of $N$ neighboring perturbations.
Then for each neighbor $\boldsymbol{\delta}'$ we test if $\mathbf{x}+ \boldsymbol{\delta}'$ 
leads to an adversarial example {(\ie flips the classifier's prediction)} that is certified 
(\ie $\|\boldsymbol{\delta}'\| \le \tilde{R}$ in the weak threat model, 
or $\overline{\|\boldsymbol{\delta}'\|} \le \tilde{R}$ in the strong threat model).

\newcommand{\neighbn}{{p}}
\begin{algorithm}[h]
  \caption{$\Neighbor$: FP Neighbors Search}\label{alg:neighbor}
  \KwIn{
  perturbation seed vector $\boldsymbol{\delta}=[\delta_1, \delta_2, \ldots, \delta_D]$;
  \\ $\neighbn$: number of {signed} neighbor values to sample from for each dimension;
  \\  $N$: number of {sampled} neighbors of $\boldsymbol{\delta}$ to return;
  }
  \KwOut{$\boldsymbol{\delta}$-$\neighb$, $N$ neighbors of $\boldsymbol{\delta$}}%
  \SetKwFunction{FMain}{$\Neighbor$}
  \SetKwProg{Fn}{Function}{:}{}
  \Fn{\FMain{$\boldsymbol{\delta}$, $N$, $n$}}{
    $\candidates \gets [{(2\neighbn+1)} \times D]$\;
    \For{$i\leftarrow 1$ \KwTo $D$}{
      $\candidates[1][i]\gets \delta_i$\;
      $r_i \gets \delta_i$\;
      $l_i \gets \delta_i$\;
      $j\gets 2$\;
      \While{$j < {2\neighbn+1}$}{
          $r_i \gets \rightp(r_i)$\;
          $l_i \gets \leftp(l_i)$\;
          $\candidates[j][i]\gets r_i$\;
          $\candidates[j+1][i]\gets l_i$\;
          $j\gets j+2$\;
      }
    }
    $\boldsymbol{\delta}$-$\neighb \gets \sample(\candidates, N)$ ~\Comment{Construct $N$ neighbor points: for each point $\boldsymbol{\delta'}$, randomly sample component $\delta_i'$ from $\candidates[\cdot][i]$, such that $\boldsymbol{\delta}$-$\neighb$ contains no duplicates nor copies of $\boldsymbol{\delta}$.}
    }
  \textbf{return} $\boldsymbol{\delta}$-$\neighb$\;
  \textbf{End Function} 
  \end{algorithm}

In Algorithm~\ref{alg:neighbor}, $\leftp(v)$ returns the first floating-point value before $v$,
and $\rightp(v)$ returns the first floating-point value after $v$.  
In the experiments, we set $\neighbn=2$ for Algorithm~\ref{alg:neighbor}.
\section{Attack experiments}
\label{sec:att:exp}

In this section we evaluate whether our rounding search attacks 
can find adversarial examples within a certified radius.
We first consider linear classifiers and then neural networks. 
We evaluate certified radii obtained using the \emph{exact} method for linear classifiers, 
and \emph{conservative}~\cite{beta-crown,gurobi}
and \emph{approximate}~\cite{cohen2019certified} certification mechanisms for neural networks.
Since the computation of exact certification for linear classifier is $R=|\mathbf{w}^T\mathbf{x}+b|/\|\mathbf{w}\|$, 
we compute it ourselves using either 32-bit or 64-bit floating-point arithmetic in Numpy.

For linear classifiers, we will conduct attacks in both the weak and strong threat models.
For neural networks, we will conduct attacks in the weak threat model.
We show that our rounding search finds adversarial examples within certified radii for all of them.
Our linear models are run on an Intel Xeon Platinum 8180M CPU, 
while our neural network models are run on a Tesla V100 16GB GPU.

\paragraph{Baseline attack rates.}
The baseline success rate for finding an adversarial example against a linear model within 
the radius defined in~Section~\ref{sec:bqcert} should be 0\% in both threat models,
since the mechanism is exact: it claims to be both sound and complete. 
The baseline success rate for radii returned by conservative mechanisms should also be 0\%
since they too are claimed to be sound. 
Though randomized smoothing comes with a failure probability $\alpha\ll 1$ to account
for sampling error in approximating a smoothed classifier,
it does not explicitly take into account errors due to rounding.

\paragraph{Model training.}
We train (primal) linear SVM with sequential minimal optimization, by using corresponding modules of scikit-learn.
Our linear classifiers are trained with $\ell_1$ regularization 
so that model weights are sparse, and perturbations are less likely to move
images outside their legal domain
(recall that the perturbation direction for a linear classifier is its weights $\boldsymbol{\nu} = \mathbf{w}$).
All ReLU networks in this section are trained with the SGD optimizer using PyTorch, 
with momentum 0.9, learning rate 0.01, batch size 64, for 15 epochs.
For some controlled experiments we require the weights and biases of the hidden layers to be positive (to activate all ReLUs).
In this case weights and biases are clamped with the lower bound 0 
after each step of training.

\begin{figure*}[t]
\centering
\begin{subfigure}{0.5\textwidth}
\centering
    \includegraphics[width=0.8\linewidth]{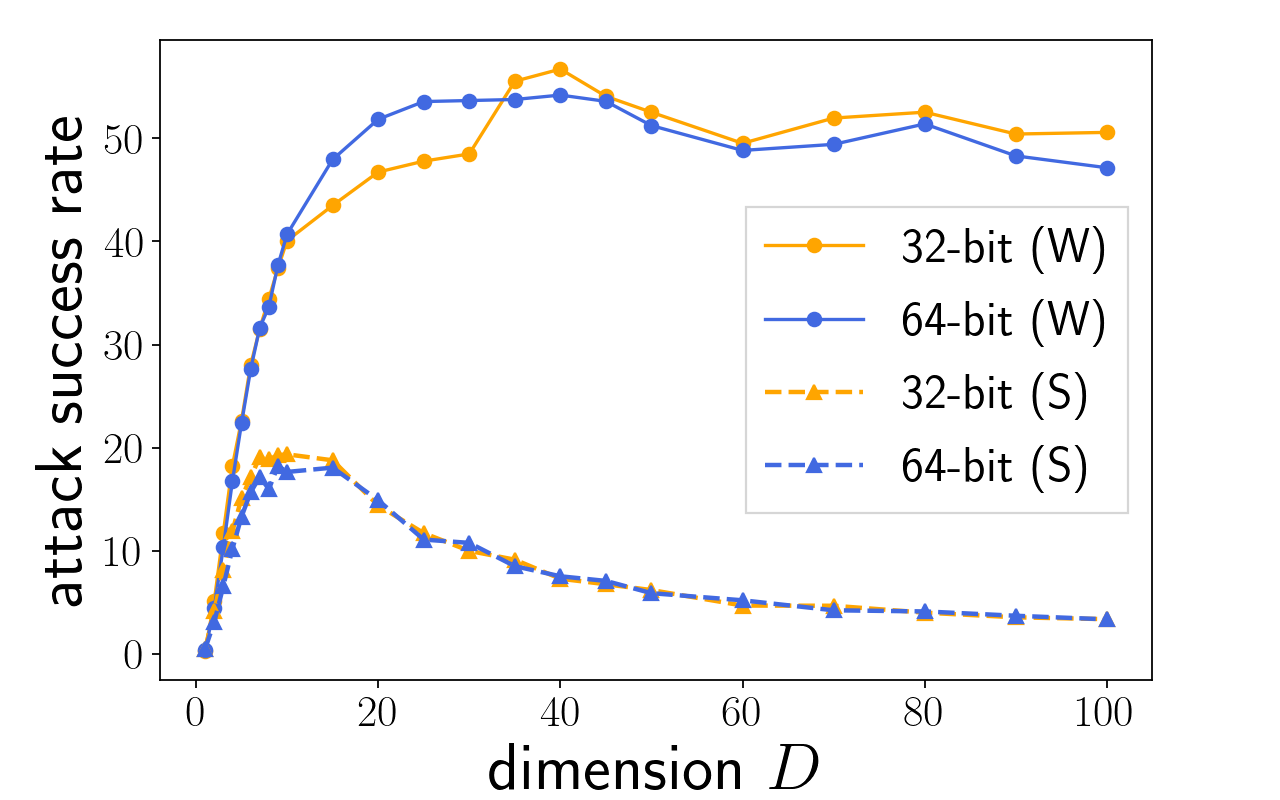}
    \caption{}
    \label{fig:linear_succ}
\end{subfigure}%
\begin{subfigure}{0.5\textwidth}
\centering
    \includegraphics[width=0.8\linewidth]{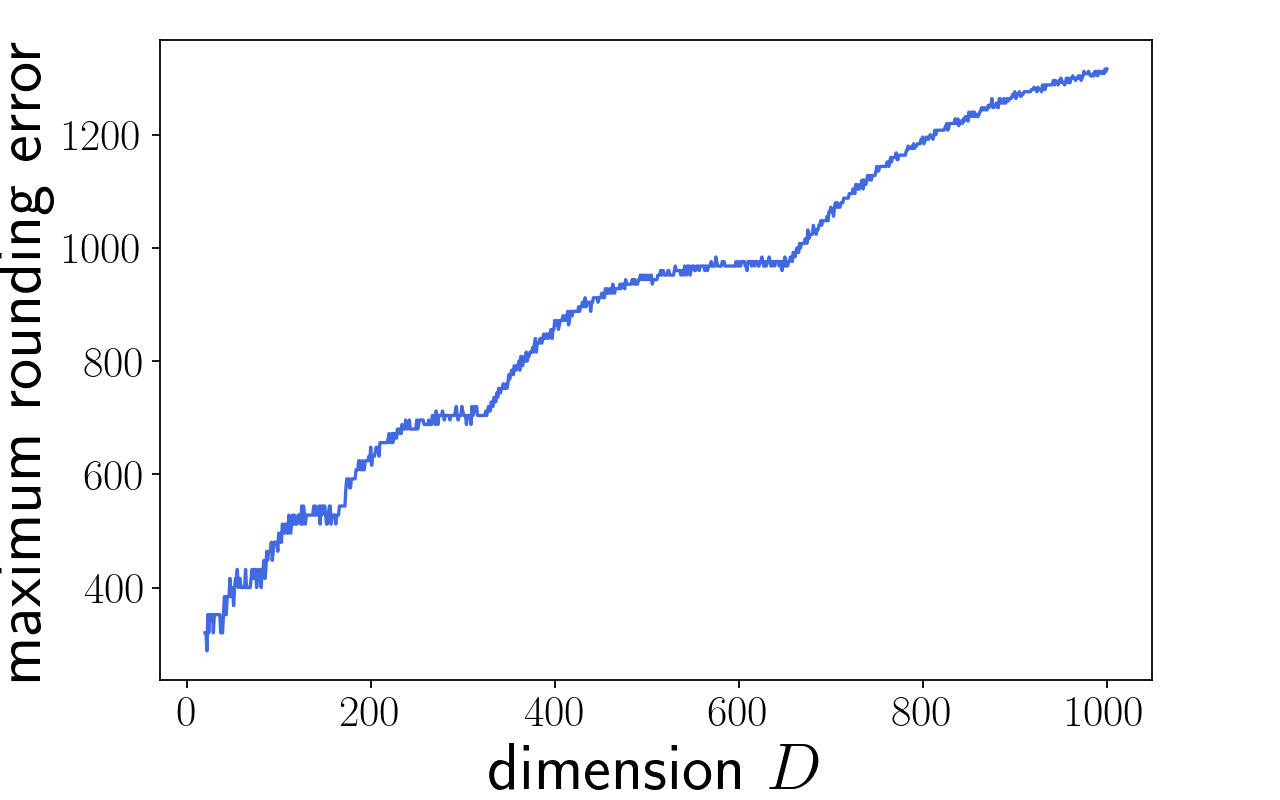}
    \caption{}
    \label{fig:r_error}
\end{subfigure}
\caption{
(a)
Rounding search attack success rates against a random binary linear classifier 
in both weak (W) and strong (S) threat models (Section~\ref{sec:exp:linear}).
For each dimension $D$, we report the {percentage} of 10 000 randomly initialized models for which we can successfully
find an adversarial example within certified radius $\tilde{R}$ for a random instance~$\mathbf{x}$ drawn from $[-1,1]^D$.
Since the attacks are against an exact certified radius,
the baseline attack rate should be~$0\%$ in both weak and strong threat models.
Model weights $\mathbf{w}$ and biases $b$
are randomly initialized with $\mathbf{w}\in[-1,1]^D$, $b\in[-1,1]$.
All values and computation is done using either 32-bit or 64-bit floating points.
(b)
Maximum rounding error in the calculation of 
the certified radius $\tilde{R}$ on a sample $\mathbf{x}$ with each $x_i=3.3\times10^{9}$, for the linear model with
$w_i=3.3\times10^{-9}$, $b=3.3\times10^9$, where $i \in [1,D]$ and $D\in[20,1000]$.
}
\label{fig:random_att}
\end{figure*}

\subsection{{Random linear classifiers}}
\label{sec:exp:linear}

To evaluate the performance of our attack in an ideal scenario,
we first conduct our attack on randomly initialized {(binary) linear classifiers with randomly generated target instances}: $f(\mathbf{x})=\sign(\mathbf{w}^T\mathbf{x}+b)$,
where weights {${w}_i$} and bias $b$ are random values drawn from the range $[-1,1]$, $\forall i \in [D]=\{1,\ldots,100\}$.
Each value is represented with either 32-bit or 64-bit floating-point precision.
For each dimension, we test 10 000 randomly initialized models. For each model
we choose one {instance}~$\mathbf{x}$ to attack, where each {component} {${x}_i$} is drawn randomly from $[-1,1]$. Hence, 
attack success rate measures the number of models out of 10 000 for which a random instance can result in a successful attack.
For each combination of $(\mathbf{w},b,\mathbf{x})$, we sample and evaluate {$N=D^2$} {neighboring perturbations} {of $\boldsymbol{\delta}=\tilde{R}\mathbf{w}/\|\mathbf{w}\|$}
using the $\Neighbor$ function (Algorithm~\ref{alg:neighbor}).

Results are shown in Figure~\ref{fig:linear_succ}.
With higher dimension, our attack success rate first increases and then flattens around~50\% in the weak threat model, and around~5\% in the strong threat model.
We investigate the flattening phenomenon in Section~\ref{app:revisit}.

With higher dimension more arithmetic operations are done in computing $\|\boldsymbol{\delta}'\|$ and {$\tilde{R}$}, 
which results in accumulation of rounding errors.
Figure~\ref{fig:r_error} further shows this influence of $D$ on the rounding error, 
which can be accumulated to the magnitude of $10^{3}$ with increasing $D$.
In summary, with the increasing rounding error,
a greater leeway is left between the real certified radius $R$ and 
the computed certified radius $\tilde{R}$ for our method to exploit,
so the attack success rate increases.

The success rates are lower in the strong threat model than in the weak threat model.
This is expected, as the leeway (\ie $\tilde{R}-\overline{\|\boldsymbol{\delta}\|}$) 
exploited by our attack in the conservative strong model is likely much smaller than
that (\ie $\tilde{R}-\|\boldsymbol{\delta}\|$) in the weak model.
Why does the success rate first increase ($D\le15$) and then steadily drop ($16\le D\le100$) in the strong threat model (the triangle-marked dashed curves of Figure~\ref{fig:linear_succ})?
The reason is that we use the upper bound of the perturbation norm $\overline{\|\boldsymbol{\delta}\|}$ estimated with interval arithmetic in the strong model.
With increasing dimension, more operations are involved, and interval arithmetic may become more uncertain about its estimation. When interval arithmetic is less certain, it outputs a larger interval so as to contain the real result.
That is, the upper bound $\overline{\|\boldsymbol{\delta}\|}$ gets larger.
Hence, although $\tilde{R}$ does get larger with increasing dimension,
the leeway $\tilde{R}-\overline{\|\boldsymbol{\delta}\|}$ exploited by our attack in the strong model
first gets larger and then gets smaller, and our success rate decreases after first increasing.

\subsection{{Linear SVM}}
\label{sec:lrmodel}
In this section, we evaluate our attack on linear SVM trained with the MNIST dataset.
MNIST~\cite{mnist} contains images of hand-written digits where each image has~784 attributes and each attribute is a pixel {intensity} within the domain $[0,255]$.
We used $\approx$ 12 000 images for training, and $\approx$ 2 000 images
for validation and evaluation of our attacks,
for each combination of the labels $i,j\in\{0,\ldots,9\}$.
We trained 45 models for each combination of {distinct} labels $i,j\in\{0,\ldots,9\}$ of the MNIST dataset.
Validation accuracies range between 91\% and 99\% for linear SVM.

We then try to find an adversarial image {with respect to} each image in the test dataset.
Our attack samples $N=5$ 000 neighbors {of {$\boldsymbol{\delta}=\tilde{R}\mathbf{w}/\|\mathbf{w}\|$}} 
using Algorithm~\ref{alg:neighbor}.

In the weak threat model, we observe non-zero attack success rates for 44/45 models 
(full results appear in Table~\ref{tab:svm_mnist_res_weak} of Appendix~\ref{app:res}),
and our attacks can have success rates up to 23.24\%.
In the strong threat model, we observe non-zero attack success rates for 11/45 models 
(full results appear in Table~\ref{tab:svm_mnist_res_strong} of Appendix~\ref{app:res}),
and our attacks can have success rates up to 0.16\%.
Recall that the baseline success rate should always be 0\%.
We demonstrate a weak model example of original and adversarial images together with their perturbation norm
and certified radius information in Figure~\ref{fig:lr_mnist_demo} of Appendix~\ref{app:res}.

\subsection{Certification for {neural nets}}
\label{sec:crown}

We now turn our attention to neural network verification mechanisms.
In this section we consider neural networks with ReLU {activations} and rely on {their} linear approximations.
Given a radius $\tilde{R}$, a neural network $F$ and an input $\mathbf{x}$, these mechanisms 
either certify $\tilde{R}$ or not.
Hence, in order to find a tight certified radius for a given model,
one can perform a binary search to check multiple radii and call those verifiers multiple times.
We avoid the binary search to find a certified radius $\tilde{R}$
by first finding an adversarial example $\mathbf{x}'$ via $\PDAW$ (Section~\ref{sec:direction} and Algorithm~\ref{alg:weight})
and then trying to verify the perturbation norms ({\ie} $\|\mathbf{x}'-\mathbf{x}\|$) of those adversarial examples using the verifiers.
We set $\PDAW$ to time out after 15 minutes.

\paragraph{Certification with $\beta$-CROWN}
\label{sec:betann}
{$\beta$-CROWN}~\cite{beta-crown} guarantees sound but not complete robustness certification.
That is, it provides a lower bound on the radius and it is possible that a tighter (larger) radius may exist.
We use {the} $\beta$-CROWN verifier~\cite{beta-crown} in the $\ell_2$ metric, 
to verify a 3-layer neural network binary classifier {with 1 node in the hidden layer}.
All model weights and biases in the hidden layers of this classifier are trained to be positive,
so the perturbation direction is always $\boldsymbol{\nu}=\mathbf{w}_t^T - \mathbf{w}_l^T$.
{The classifier has validation accuracy 99.67\%.}
We use $\PDAW$, with step size $\step=1\times10^{-5}$, to incrementally add perturbation
in direction $\boldsymbol{\nu}$ to image $\mathbf{x}$, until its prediction is flipped, and we get $\mathbf{x}'$.
Then we use $\beta$-CROWN to verify the image {with respect to} $\|\boldsymbol{\delta}\| = \|\mathbf{x}'-\mathbf{x}\|$.
If the certification succeeds, we have a successful attack.
$\PDAW$ times out on {30 out of 2 108} images.
{When the attack does not time out, it takes $\approx 30$ seconds. A call to a verifier takes $\approx 1$ second.} 
We conduct our attack on all MNIST test images labeled 0 or 1.
We find adversarial images for 2 078 images, and $\beta$-CROWN erroneously verifies~53 of them.
Our attack success rate is 2.6\%.

\paragraph{Certification with MIP solver.}
\label{sec:mip-res}
We now consider another method that provides conservative verification via mixed-integer programming (MIP).
We use the implementation from~\cite{beta-crown}, 
which uses the Gurobi MIP solver~\cite{gurobi} for neural network verification, 
with Gurobi feasibility tolerance $2\times10^{-5}$.
We verify two 3-layer neural network multiclass classifiers {with 100 nodes in their hidden layer}.
For the first classifier, the weights and biases in the hidden layers are all trained to be positive,
so the perturbation direction is $\boldsymbol{\nu} = \mathbf{w}_t^T - \mathbf{w}_l^T$.
The second classifier is trained without constraints on its weights and represents a regular {network} without artefacts.
The validation accuracies for the first and second classifiers are 84.14\%, and 96.73\%, respectively.

We use $\PDAW$ to attack the two classifiers on all images of the MNIST test dataset, with step size $\step=1\times10^{-5}$.
We found adversarial images against 8 406 images for the first classifier, 
and adversarial images against 9 671 images for the second classifier.
$\PDAW$ times out on {only} 8 and 2 images for the first and second classifier, respectively.
We then use MIP to verify each image {with respect to} their adversarial image's perturbation norm~$\|\boldsymbol{\delta}\|$.
Each attack takes $\approx 30$ seconds and each verification takes $\approx 10$ seconds.
MIP successfully verified 5 108 out of 8 406 successfully attacked images for the first classifier,
and verified 1 531 out of 9 671 successfully attacked images.
That is, the attack success rate is 60.76\% on an artificially trained network (where ReLUs are {all} activated) 
and~15.83\% on the second classifier trained without artefacts.

\paragraph{Certification with randomized smoothing.}
\label{sec:rs-res}
Our experiments on approximate certification attack a 3-layer ReLU network with 100 nodes in the hidden layer 
trained on the MNIST~\cite{mnist} dataset with labels 0 and 1, which has validation accuracy 99.95\%. 
Pixel values of all images are scaled to $[0, 1]$.

We adopt the same prediction procedures as~\cite{li2019certified}.
Given an instance $\mathbf{x}$, the smoothed classifier $g$ runs the base classifier $f$ 
on $M$ noise-corrupted instances of $\mathbf{x}$, and returns the top class $k_A$ that has been predicted by $f$.
The estimation of the certified radius $\tilde{R}$ for a smoothed classifier $g$
is based on the probability distribution of the winner and runner-up classes (\ie $p_A$ and $p_B$).
$p_A$ and $p_B$ are estimated via interval estimation~\cite{binomial_prop}, with confidence $1-\alpha$.
As recommended in these papers, we set $\alpha=0.1\%$, Gaussian prediction noise scale $\sigma_P\in\{1.0, 3.0, 5.0, 7.0\}$, 
and let the number of Monte Carlo {samples} $M$ range in $\{100, 1000, 10000\}$.

We conduct our attack as follows.
Given $\mathbf{x}$ and $\tilde{R}$ as returned by $g$, we first use $\PDAW$ (maximum runtime is set as 15 minutes) to find an adversarial perturbation $\boldsymbol{\delta}$ against the base classifier $f$,
then use $\Neighbor$ to find $N=1000$ neighbors $\boldsymbol{\delta}'$ of $\boldsymbol{\delta}$.
We evaluate all generated adversarial examples $\mathbf{x}+\boldsymbol{\delta}'$ using the robust classifier $g$.
For an image $\mathbf{x}$, if any one $\mathbf{x}+\boldsymbol{\delta}'$ of the $N=1000$ adversarial images flips the classification of the model with {$\|\boldsymbol{\delta}'\| \le \tilde{R}$},
we have a successful attack against the approximate certification on this image.
We chose $f$ in the process of perturbation generation as we need a concrete ReLU network with clear weights.
We conduct attacks on 2 115 images of the MNIST dataset with labels 0 and 1, and have success rates up to 21.11\% against approximate certification. 
Detailed results are listed in Table~\ref{tab:rs_res} of Appendix~\ref{app:res}.

\section{Mitigation: Certification with rounded interval arithmetic}
\label{sec:mitigation}

Our attack results demonstrate that floating-point rounding invalidates the soundness claims of 
a wide range of certification implementations for a variety of common models.
How might such rounding errors in certification calculations be mitigated?

Rounding errors violating certifications are sometimes small.
For example, the rounding error for the certified radius of the first MNIST image of Figure~\ref{fig:lr_mnist_demo} of Appendix~\ref{app:res} is in the 13th decimal place.
One's first intuition may be to adopt slightly more conservative radii (\eg using $\tilde{R}-\gamma$ for some positive constant $\gamma$). 
Unfortunately, such radii are not in general sound, and attacks against $\tilde{R}-\gamma$ are still possible.
For example, for the same setting as in Section~\ref{sec:exp:linear}, 
it is easy to construct a linear classifier and find adversarial examples against it within 
$\tilde{R}-1.0$. As we show in Figure~\ref{fig:r_error}, it is possible to construct instance-model pairs that undergo significant rounding errors. 
Just as a certification $R$ should be sound when assuming real arithmetic, an implemented certification $\tilde{R}$ must be sound under floating-point computation. 
Constant corrections $\gamma$ and similar mitigations that are not sound are therefore inappropriate for certifying robustness.

We outline a mitigation applying rounded interval arithmetic~\cite{higham2002accuracy} to certified robustness.
Interval arithmetic is an approach to bounding approximations in numerical analysis. 
It involves replacing each (possibly) approximate floating-point scalar with an interval 
with floating-point end-points that are guaranteed to bound the scalar. 
Our goal is to enable any computation on floating-points to be possible on interval representations, 
such that this `guaranteed bounding' property on input data, parameters, or intermediate computations, 
is invariant to further computation.

It is first necessary to extend the standard arithmetic operators to interval arithmetic. 
The following definition demonstrates this process assuming arithmetic on $\mathbb{R}$: 
provided two target values $a,b$ are contained in intervals $[a_1,a_2], [b_1,b_2]$ to begin with, 
then the presented operators are guaranteed to \emph{maintain} this property on basic arithmetic operations. 
For example, $a+b\in ([a_1,a_2]+[b_1,b_2])$, where the operator `+' on reals is overloaded to real intervals.

\begin{definition}\label{def:arithmetic-operators}
For real intervals $[a_1,a_2], [b_1,b_2]$, define the following interval operators for elementary arithmetic:
\begin{itemize}
  \item Addition $[a_1,a_2] + [b_1,b_2]$ is defined as $[a_1+b_1, a_2+b_2]$.
  \item Subtraction $[a_1,a_2] - [b_1,b_2]$ is defined as $[a_1-b_2, a_2-b_1]$.
  \item Multiplication $[a_1,a_2] * [b_1,b_2]$ is defined as $[\min\{a_1b_1,a_2b_1,\\a_1b_2,a_2b_2\}, \max\{a_1b_1,a_2b_1,a_1b_2,a_2b_2\}]$.
  \item Division $[a_1, a_2] / [b_1, b_2]$ is defined as $[a_1,b_2] \times \frac{1}{[b_1, b_2]}$ where by cases:
  \begin{eqnarray*}
    \frac{1}{[b_1, b_2]} &=& \begin{cases}
            \left[ \frac{1}{b_2}, \frac{1}{b_1}\right]\enspace, & \mbox{if } 0\notin [b_1, b_2] \\
            \left[ -\infty, \frac{1}{b_1}\right] \cup \left[ \frac{1}{b_2}, \infty\right]\enspace, & \mbox{otherwise}
    \end{cases} \\
    \frac{1}{[b_1, 0]} &=& \left[ -\infty, \frac{1}{b_1}\right], \enspace
    \frac{1}{[0, b_2]} = \left[ \frac{1}{b_2}, \infty\right]
  \end{eqnarray*}
\end{itemize}
\end{definition}

\emph{Rounded} interval arithmetic~\cite{higham2002accuracy} further employs {floating-point} rounding 
when implementing the arithmetic operators, to achieve these sound floating-point extensions.

\begin{lemma}\label{lem:sound-arithmetic}
Consider the interval arithmetic operators in Definition~\ref{def:arithmetic-operators} with
the resulting lower (upper) interval limits computed using IEEE754 floating-point arithmetic with rounding down (up),
then the resulting \emph{rounded interval arithmetic operators} are sound floating-point extensions.
\end{lemma}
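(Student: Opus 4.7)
The plan is to prove that each rounded interval operator $\hat{\oplus}$ encloses the exact real-valued image of the input intervals under $\oplus$; that is, for any $a \in [a_1,a_2]$ and $b \in [b_1,b_2]$, the real value $a \oplus b$ lies in the computed interval. This is the natural formalization of ``sound floating-point extension'' and is what is needed downstream to conclude that certified radii computed with these operators are not overestimated.

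First I would argue that Definition~\ref{def:arithmetic-operators}, interpreted in exact real arithmetic, already gives a sound (in fact tight) enclosure: for addition and subtraction this is immediate from monotonicity in each argument; for multiplication the image of a product of closed intervals is attained at one of the four endpoint combinations $a_i b_j$, which is exactly the $\min$/$\max$ appearing in the definition; and for division one reduces to multiplication by $1/[b_1,b_2]$ with the three case splits handling whether the denominator straddles, abuts, or excludes zero. Call the resulting exact interval $[L, U] \supseteq \{a \oplus b : a \in [a_1,a_2], b \in [b_1,b_2]\}$.

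Second, I would invoke the IEEE~754 guarantee that the basic arithmetic operations, when performed under the round-toward-$-\infty$ mode, return a floating-point value $\mathrm{fl}_\downarrow(z) \le z$, and under round-toward-$+\infty$ return $\mathrm{fl}_\uparrow(z) \ge z$, with both monotone in $z$. Applied to each arithmetic sub-expression that defines $L$ and $U$, this yields floating-point values $\hat{L} \le L$ and $\hat{U} \ge U$, so the computed interval $[\hat{L}, \hat{U}]$ contains $[L, U]$ and therefore every true value $a \oplus b$. The monotonicity of directed rounding is what allows composition: nested operations (e.g.\ the $\min$ over four products, or the multiplication-by-reciprocal in division) remain correctly directed as long as each sub-operation is rounded in the same direction as the bound being computed.

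The main obstacle will be division when $0 \in [b_1,b_2]$, where the ``exact'' image is an unbounded set $(-\infty, 1/b_1] \cup [1/b_2, \infty)$ rather than a single interval. I would handle this by treating $\pm\infty$ as legitimate IEEE~754 values (which round-toward-$\pm\infty$ produces naturally from finite division by zero) and by verifying that, in each of the three denominator cases stated in Definition~\ref{def:arithmetic-operators}, the rounded reciprocal endpoints still satisfy $\hat{L} \le 1/b$ and $\hat{U} \ge 1/b$ for every $b$ in the denominator interval; the subsequent multiplication step then falls back on the already-proven multiplication case. A short remark would note that if the denominator straddles zero the enclosing ``interval'' is taken to be $[-\infty,\infty]$, which is trivially sound, so downstream certifications would simply abstain rather than certify a radius in that degenerate case.
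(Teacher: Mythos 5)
Your proof is correct, and it is precisely the standard numerical-analysis argument (exact interval enclosure via endpoint monotonicity, followed by the IEEE~754 directed-rounding guarantee that round-down/round-up results bracket the exact values and compose monotonically) that the paper itself does not spell out: the lemma is stated without proof and deferred to the rounded-interval-arithmetic literature~\cite{higham2002accuracy}. Your handling of the degenerate division case ($0$ in the denominator interval yielding an unbounded, trivially sound enclosure) is a sensible addition that the paper's definition leaves implicit.
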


That is, given a collection of floating-point intervals representing a collection of corresponding reals, 
rounded extension operators produce floating-point intervals that are guaranteed 
to contain the result of corresponding (base, unextended) arithmetic on the given collection of reals.

By representing constants as singleton intervals, rounded interval arithmetic computes 
floating-point bounds on real{-valued} results.
Applying Definition~\ref{def:arithmetic-operators} and Lemma~\ref{lem:sound-arithmetic}, we can 
compute a sound floating-point interval for the rational $1/3$:
\begin{align*}
  \frac{[1,1]}{[3,3]} & = [1,1] * \frac{1}{[3,3]} = [1,1] * \left[\frac{1}{3}, \frac{1}{3}\right] \\
  & = \left[\left\lfloor{\frac{1}{3}}\right\rfloor, \left\lceil{\frac{1}{3}}\right\rceil\right] = [0.33\dots33, 0.33\dots337] 
\end{align*}

Beyond rounded interval arithmetic operators, libraries such as PyInterval offer 
rounded interval implementations of standard algebraic functions (\eg the square root) and transcendental functions 
(\eg the exponential, logarithm, trigonometric, and hyperbolic functions) 
using Newton-Raphson approximation that itself applies the above basic operators. 
Such functionality enables application of rounded interval arithmetic to general machine learning models and their certifications.

\begin{theorem}\label{thm:mitigation}
    Consider a classifier $f$, floating-point instance $\mathbf{x}$, and a certification mechanism $R(f,\mathbf{x})$ that 
    is sound when employing real arithmetic. 
    If $R(f,\cdot)$ can be computed by a composition of real-valued operators $\psi_1,\ldots,\psi_L$ with sound 
    floating-point extensions $\phi_1,\ldots,\phi_L$, then the {following} certification mechanism 
    $\underline{R}(f,\mathbf{x})$ is sound with floating-point arithmetic under the strong threat model: 
    run the compositions of $\phi_1,\ldots,\phi_L$ on coordinate-wise intervals $[\mathbf{x},\mathbf{x}], [f(\mathbf{x}),f(\mathbf{x})]$ to
    obtain $[\underline{R},\overline{R}]$;
    return $\underline{R}$.
\end{theorem}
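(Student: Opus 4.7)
The plan is to prove Theorem \ref{thm:mitigation} by establishing two things in sequence: (a) the rounded interval computation produces a floating-point interval $[\underline{R},\overline{R}]$ that provably encloses the real-arithmetic value $R(f,x)$; and (b) if the real-valued $R$ is a sound certification, then the smaller radius $\underline{R}\le R$ inherits soundness. Only step (a) requires work; (b) is almost immediate from Definition \ref{def:certifications}.

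First I would make the notion of a \emph{sound floating-point extension} precise: each $\phi_\ell$ maps input intervals $I_1,\ldots,I_{k_\ell}$ (with floating-point endpoints) to an output interval $\phi_\ell(I_1,\ldots,I_{k_\ell})$ such that, for every choice of reals $r_j\in I_j$, the real value $\psi_\ell(r_1,\ldots,r_{k_\ell})$ lies in $\phi_\ell(I_1,\ldots,I_{k_\ell})$. This is exactly the enclosure property guaranteed by Lemma \ref{lem:sound-arithmetic} for the elementary operators in Definition \ref{def:arithmetic-operators}, and it is the hypothesis we apply to each $\phi_\ell$ in the composition.

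Next I would prove enclosure by induction along the computation graph, i.e.\ on $\ell=1,\ldots,L$. The base case is trivial: the inputs $[x,x]$ and $[f(x),f(x)]$ are exact singleton intervals, so they contain the corresponding real values. For the inductive step, assuming every interval computed so far encloses the analogous real intermediate value, the enclosure property of $\phi_\ell$ immediately shows that its output interval encloses $\psi_\ell$'s real output on those inputs. Applied to the final step, this gives $R(f,x)\in[\underline{R},\overline{R}]$, and in particular $\underline{R}\le R(f,x)$. Soundness of $\underline{R}$ then follows: by hypothesis, $\|x'-x\|\le R(f,x)\Rightarrow f(x')=f(x)$ for all $x'\in\mathbb{R}^D$, and since $\|x'-x\|\le \underline{R}$ implies $\|x'-x\|\le R(f,x)$, the same conclusion transfers.

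The main obstacle is conceptual rather than computational: the enclosure property must hold for \emph{all} reals consistent with the interval inputs, not merely the floating-point endpoints that arose from earlier rounding. This is what makes the inductive step legitimate when we chain $\phi_\ell$'s, and is precisely why outward rounding (rather than, say, round-to-nearest) is required in Lemma \ref{lem:sound-arithmetic}. A secondary subtlety is that the theorem is silent on which operator catalog $\{\phi_\ell\}$ suffices: for linear-classifier certification one needs only the $+,-,\times,/$ primitives of Definition \ref{def:arithmetic-operators} together with a sound extension of the square root used in $\|w\|$; for randomized smoothing, one would additionally need sound floating-point enclosures of $\Phi^{-1}$, which the theorem packages as an input assumption.
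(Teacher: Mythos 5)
Your proposal is correct and follows essentially the same route as the paper: induction over the levels of composition, with singleton input intervals as the base case and the enclosure property of each sound floating-point extension driving the inductive step. You additionally spell out the final (and in the paper implicit) observation that soundness transfers from $R(f,x)$ to any $\underline{R}\leq R(f,x)$ because the certification condition of Definition~\ref{def:certifications} is monotone in the radius, which is a worthwhile clarification but not a different argument.
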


\begin{proof}
The result follows by strong induction on the levels of composition 
implementing $\underline{R}(f,x)$, with repeated application of Lemma~\ref{lem:sound-arithmetic}. 
The induction hypothesis is that up to $l\leq L$ levels of composition of interval operators
produces intervals that contain the result of the corresponding real operators. 
In other words, the real-valued composition $(\psi_l\circ\psi_{l-1}\circ\cdots\circ\psi_1)(\mathbf{x})$ 
is contained in $(\phi_l\circ\phi_{l-1}\circ\cdots\circ\phi_1)(\mathbf{x})$.
The base case comes from the (coordinate-wise) intervals 
$[\mathbf{x},\mathbf{x}], [f(\mathbf{x}),f(\mathbf{x})]$ containing coordinates of $\mathbf{x}$ and $f(\mathbf{x})$; 
the inductive step follows from repeated application of sound floating-point extensions.
\end{proof}

\begin{remark}
While the correction $\tilde{R}-\gamma$, may be a straw argument mitigation for constant $\gamma$, our mitigation can be seen as a sound input-dependent correction.  
To make explicit this required dependence on the input-model pair, we let $\gamma$ be a function of $f, \mathbf{x}$. 
By taking 
$\gamma(f,\mathbf{x}) = \tilde{R}(f,\mathbf{x}) - \underline{R}(f,\mathbf{x})$, we immediately arrive at a sound
certificate via the corresponding correction $\tilde{R}(f,\mathbf{x})-\gamma(f,\mathbf{x}) = \underline{R}(f,\mathbf{x})$ by simply rearranging terms.
\end{remark}

We offer example applications of Theorem~\ref{thm:mitigation} on linear classifiers 
in both weak and strong threat models, with 64-bit floating-point arithmetic.
We use the PyInterval library~\cite{pyinterval} that performs rounded interval arithmetic 
to compute sound $\underline{R}$ and $\overline{\|\boldsymbol{\delta}\|}$ for linear classifiers~\cite{cohen2019certified}.
Our attack success rates for randomly initialized linear classifiers (Section~\ref{sec:exp:linear}) drop to 0\% for all dimensions
in both threat models.

In sum, our theoretical and empirical results provide support for mitigating attacks 
against exact robustness certifications~\cite{cohen2019certified}.
However, we should also note that our certificate $\underline{R}$ is both practically and theoretically safe in the strong threat model.
In the weak threat model, the rounding errors introduced in calculating the norm of perturbation
could potentially invalidate the soundness of $\underline{R}$.
Nontheless, $\underline{R}$ should still mitigate most attacks in practice, as we have shown for linear classifiers.

\section{Discussion}

\subsection{Revisiting attacks}
\label{app:revisit}

We now revisit our Section~\ref{sec:exp:linear} experiment on randomized binary linear classifiers.
Figure~\ref{fig:linear_succ} observes an intriguing behavior of fast initial improvement of 
attack success rate to 50\% followed by asymptoting 
in the weak threat model, for both 32-bit and 64-bit representations.

To explore why our success rate flattens,
we conduct an experiment in the weak threat model with 64-bit representation.
We use the PyInterval library~\cite{pyinterval} to calculate the lower and upper
bounds, $\underline{R}$ and $\overline{R}$, of certified radius $R$.
Then we conduct a binary search in $[\underline{R}, \overline{R}]$
to find the maximum certified radius $\hat{R}$, within which there are no adversarial examples (Theorem~\ref{thm:mitigation}). 
We take $\hat{R}$ as our best approximation to~$R$, 
and use it to estimate the rounding error of~$\tilde{R}$ (\ie $\hat{R}-\tilde{R}$).

As in Figure~\ref{fig:r_error2} of Appendix~\ref{app:res}, our attack against linear models
exploits rounding errors in the magnitude of $10^{-15}$.
We find that the rate that~$\tilde{R}$ is overestimated ($\tilde{R} > \hat{R}$)
flattens around~50\% (Figure~\ref{fig:r_over_under} of Appendix~\ref{app:res}).
Recall that our attack works when~$\tilde{R}$ is overestimated, so there is leeway for our attack to exploit.
Therefore, our success rate flattens around 50\% because that is the maximum rate that $\tilde{R}$ is overestimated.

Why does the overestimated rate flatten around 50\%?
One intuition is that rounding is effectively random:
there is a 50\% chance for $\tilde{R}$ to be overestimated (rounded up),
and another 50\% chance for $\tilde{R}$ to be underestimated (rounded down).
Hence, no matter in 32-bit or 64-bit representation,
our attack success rates flatten around 50\%.

\subsection{Limitations}
\label{app:limit_disclosure}

We identify the following limitations that could be addressed as future work.
First, our attacks against methods based on randomized smoothing did not study the relationship between the attack success rate 
and the soundness probability of these methods due to their probabilistic nature. 
A possible future direction would be to study how the two interact. 
Second, we showed how to adopt our mitigation based on interval arithmetic generally, 
and demonstrate this mitigation for linear models and exact certifications. 
Detailed exploration of embedding these mitigations in other verifiers and models is left as future work.
Third, our evaluations demonstrate the effectiveness of our attacks
on relatively small models (e.g., linear SVM) and datasets (e.g., MNIST).
Further evaluation on state-of-the-art models and datasets is also left as future work.

Our attacks could potentially be used to find violations of robustness guarantees in system implementations,
by exploiting floating-point vulnerabilities in them.
Note that to date, we are not aware of any certified robustness implementations yet in deployment. 
However since certifications could be deployed in the near future, this work serves an important role in highlighting this new vulnerability, 
and in proposing an effective mitigation against it.
\section{Related work}
\label{sec:related}

Several works have explored the influence of floating-point representation on guarantees of verified neural networks.
For example, verifiers designed for floating-point representation
have been shown to not necessarily work for quantized neural networks~\cite{giacobbe2020many,henzinger2021scalable,jia2020efficient}.

The closest to our work is the independent work by~\cite{jia2021exploiting} 
who also exploit rounding errors to {discover violations of network robustness certifications.} %
Our work differs from~\cite{jia2021exploiting} on the adversarial examples we {find}. %
As we show in Section~\ref{sec:crown}, we are able to find an adversarial example $\mathbf{x}'$ 
for unaltered natural image $\mathbf{x}$ from test data, within that image $\mathbf{x}$'s certified radius.
The work by Jia and Rinard, instead, {does not find certification-violating adversarial examples of test instances. 
It finds perturbed inputs~$\mathbf{x}_0'$ of synthetic inputs~$\mathbf{x}_0$, that violate certifications of $\mathbf{x}_0$. 
In particular, they adjust brightness of a natural test image~$\mathbf{x}$ to produce a~$\mathbf{x}_0$.} 
That is, {their attack point}~$\mathbf{x}_0'$ is \emph{outside {the} certified radius of~$\mathbf{x}$.}
Hence, our attack can be seen as a stronger attack that is possible due to a novel attack methodology
based on accurate perturbation directions.
In another parallel and independent work, Voracek and Hein~\cite{voravcek2022sound} also attacked randomized smoothing by exploiting the finite representation of floating points.
However, they also attacked synthetic data and model, while we show attacks on unperturbed networks and test-set images and propose a sound and efficient mitigation based on interval arithmetic.

Research in the area of numerical analysis has proposed approaches to address the limitations of floating-point rounding, 
with a focus on measuring the stability of calculations. 
Proposed approaches include replacing floating-point arithmetic with
interval arithmetic~\cite{jaulin2001interval} or affine arithmetic~\cite{de2004affine}.
Both account for rounding errors and return an interval that contains the correct result.
Singh \etal~\cite{singh2018fast, singh2019abstract} use interval arithmetic 
in neural network certification like what we do for our mitigation, and are sound w.r.t.~floating-point arithmetic.
However, they use the $\ell_{\inf}$ metric in neural network certification,
which avoids the influence of potential rounding errors in the calculation of the perturbation norm.
We use the $\ell_2$ metric, and handle the impact of rounding errors in the calculation of 
the perturbation norm for certification in the strong threat model. Moreover, our mitigation can be used to make 
implementations of any valid certification mechanisms sound, by adopting interval arithmetic due to Theorem~\ref{thm:mitigation}.
We adopt interval arithmetic with the implementation PyInterval~\cite{pyinterval} 
in the calculation of robustness certification.

Finally, exact robustness certifications without rounding errors can be achieved under special settings.
The exact robustness certification problem can be transformed into a 
mixed integer linear programming (MILP) problem~\cite{tjeng2017evaluating},
which can be solved over rational numbers without floating-point errors~\cite{applegate2007exact}.
However to our knowledge, currently there is no work that applies exact MILP solvers to
neural network robustness certification, given their very limited performance.
Jia and Rinard~\cite{jia2020efficient} did exact certification on a special type of neural network --- binarized neural networks (BNNs),
which quantize their weights and activations to be binary. 
Such approaches do not involve floating-point arithmetic, and hence are not vulnerable to floating-point errors.

\section{Conclusion}
Certified robustness has been proposed as a defense against adversarial examples.
In this work we have shown that guarantees of several certification mechanisms
do not hold in practice since they rely on real numbers that are approximated on modern computers.
Hence, computation on floating-point numbers---used to represent real numbers---can overestimate
 certification guarantees due to rounding.
We propose and evaluate a rounding search method that finds adversarial inputs on
linear classifiers and verified neural networks 
within their certified radii---violating their certification guarantees.
We propose rounded interval arithmetic as the mitigation,
by accounting for the rounding errors involved in the computation
of certification guarantees.
We conclude that if certified robustness is to be used for security-critical
applications, their guarantees and implementations need to account for 
limitations of modern computing architecture.

\section{Acknowledgment}

This work was supported by the joint CATCH MURI-AUSMURI,
and The University of Melbourne's Research
Computing Services and the Petascale Campus Initiative.
The first author is supported by the University of Melbourne 
research scholarship (MRS) scheme.

\bibliography{refs}
\bibliographystyle{ACM-Reference-Format}

\appendix

\section{Attack results}
\label{app:res}

The results of rounding search attacks on linear SVM from Section~\ref{sec:lrmodel} 
for each combination of the {distinct} labels $i,j\in\{0,\ldots,9\}$ of 
the MNIST dataset are listed in Table~\ref{tab:svm_rates} 
for the weak model and the strong model.
An example of original and adversarial images together with their perturbation and certified radius information 
based on linear SVM attack is presented in Figure~\ref{fig:lr_mnist_demo}.

\begin{table}[h]
\caption{
The success rates of our rounding search attack against linear SVM models on 
the MNIST dataset in the weak and strong threat models (Section~\ref{sec:lrmodel}).
In the weak model,
an experiment is successful if adversarial perturbation~$\|\boldsymbol{\delta}\|$ 
is less than or equal to certified radius $\tilde{R} = |\mathbf{w}^T\mathbf{x}+b|/\|\mathbf{w}\|$
computed with finite-precision floating-point arithmetic (\ie $\|\boldsymbol{\delta}\| \le \tilde{R}$).
In the strong model, an experiment is successful if the upper bound of adversarial perturbation~$\overline{\|\boldsymbol{\delta}\|}$ 
is less than or equal to certified radius $\tilde{R}$.
45 linear SVM models have been trained and attacked for each combination of labels respectively.
A cell in row $i$ and column $j$ reports the attack success rate for classes original $i$ and target $j$, 
plus the attack success rate for classes original $j$ and target $i$. 
For example, in the weak model, for an SVM model, we can find adversarial perturbations within 
certified radius for 11.06\% of all images labelled 1 (with the model classifying them as 0) 
or images labelled 0 (with the model classifying them as 1).
We use ``-'' to denote models where rounding search did not find an adversarial example. 
}\label{tab:svm_rates}
    \begin{subtable}{\linewidth}\centering
    {
    \addtolength{\tabcolsep}{-0.4em}
    \renewcommand{\arraystretch}{1.15}
    \begin{tabular}{cccccccccc}
    \hline
    labels & 1       & 2      & 3      & 4      & 5       & 6      & 7       & 8      & 9      \\ \hline
    0      & 11.06\% & 5.12\% & 3.62\% & 2.65\% & 0.53\%  & 2.58\% & 6.18\%  & 3.43\% & 3.12\% \\
    1      &         & 5.77\% & 5.64\% & 6.8\%  & 23.24\% & 0.38\% & 16.04\% & 4.46\% & 6.06\% \\
    2      &         &        & 1.08\% & 0.3\%  & 1.3\%   & -      & 1.12\%  & 1.0\%  & 1.03\% \\
    3      &         &        &        & 0.1\%  & 1.21\%  & 5.18\% & 4.47\%  & 1.31\% & 0.25\% \\
    4      &         &        &        &        & 1.92\%  & 0.05\% & 4.08\%  & 1.84\% & 0.4\%  \\
    5      &         &        &        &        &         & 2.16\% & 5.78\%  & 0.05\% & 0.58\% \\
    6      &         &        &        &        &         &        & 2.92\%  & 10.09\% & 0.05\% \\
    7      &         &        &        &        &         &        &         & 2.3\%  & 1.57\% \\
    8      &         &        &        &        &         &        &         &        & 1.06\% \\ \hline
    \end{tabular}
    }
    \caption{weak model}\label{tab:svm_mnist_res_weak}
    \end{subtable}
    \\
    \begin{subtable}{\linewidth}\centering
    {
    \addtolength{\tabcolsep}{-0.1em}
    \renewcommand{\arraystretch}{1.15}
    \begin{tabular}{cccccccccc}
    \hline
    labels & 1       & 2      & 3      & 4      & 5       & 6      & 7       & 8      & 9      \\ \hline
    0      & -       & -      & -      & -      & -       & -      & -       & -      & 0.05\% \\
    1      &         & -      & 0.05\% & 0.05\% & -       & -      & -       & -      & -      \\
    2      &         &        & -      & -      & -       & -      & -       & -      & -      \\
    3      &         &        &        & -      & 0.16\%  & 0.05\% & -       & 0.05\% & -      \\
    4      &         &        &        &        & -       & 0.05\% & -       & 0.05\% & -      \\
    5      &         &        &        &        &         & -      & -       & 0.05\% & 0.16\% \\
    6      &         &        &        &        &         &        & -       & -      & 0.05\% \\
    7      &         &        &        &        &         &        &         & -      & -      \\
    8      &         &        &        &        &         &        &         &        & -      \\ \hline
    \end{tabular}
    }
    \caption{strong model}\label{tab:svm_mnist_res_strong}
    \end{subtable}
\end{table}

\begin{figure}[h]
\centering
\begin{subfigure}{0.25\textwidth}
\centering
    \includegraphics[width=.9\linewidth]{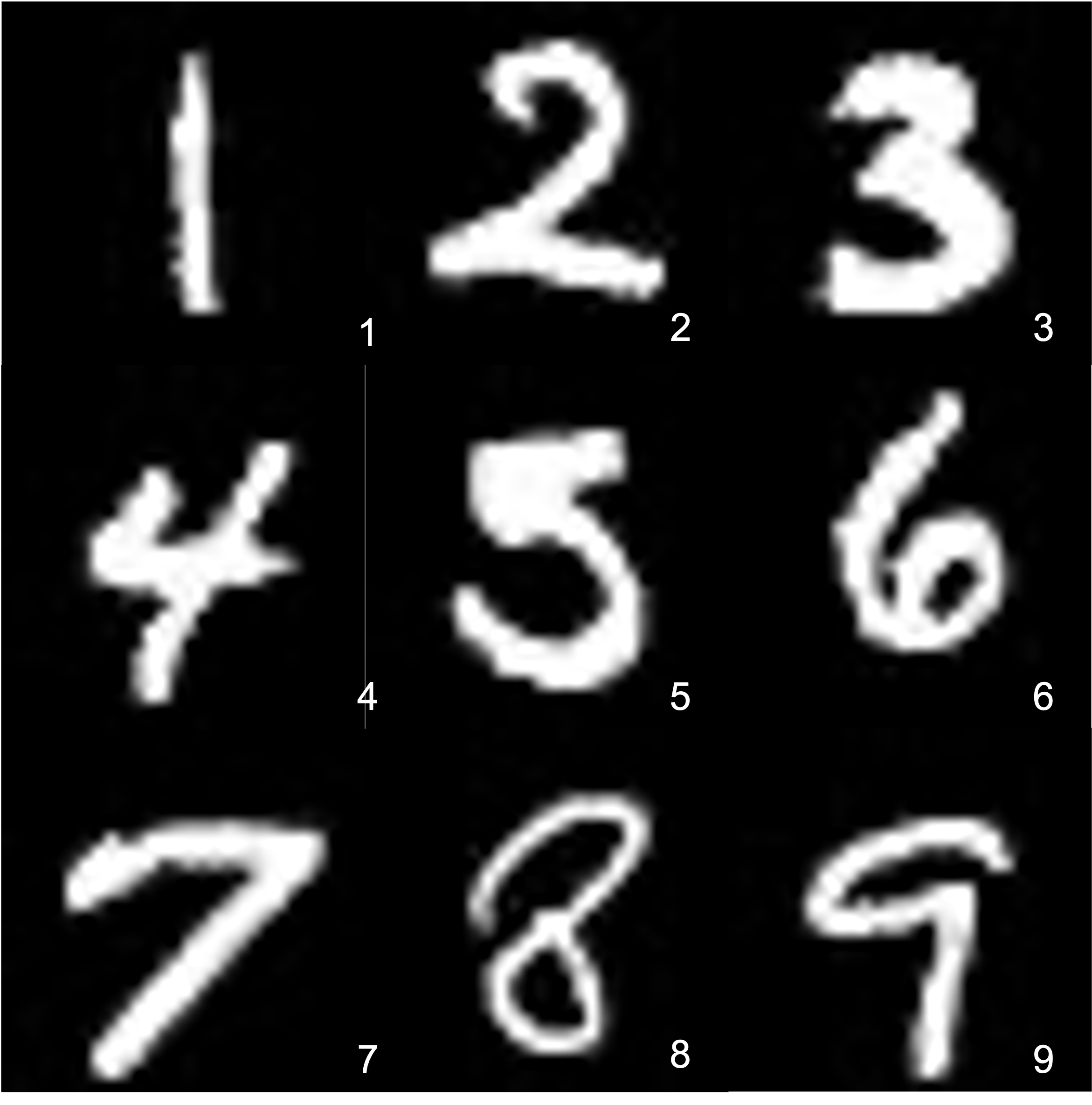}
    \caption{}
\end{subfigure}%
\begin{subfigure}{0.25\textwidth}
\centering
    \includegraphics[width=.9\linewidth]{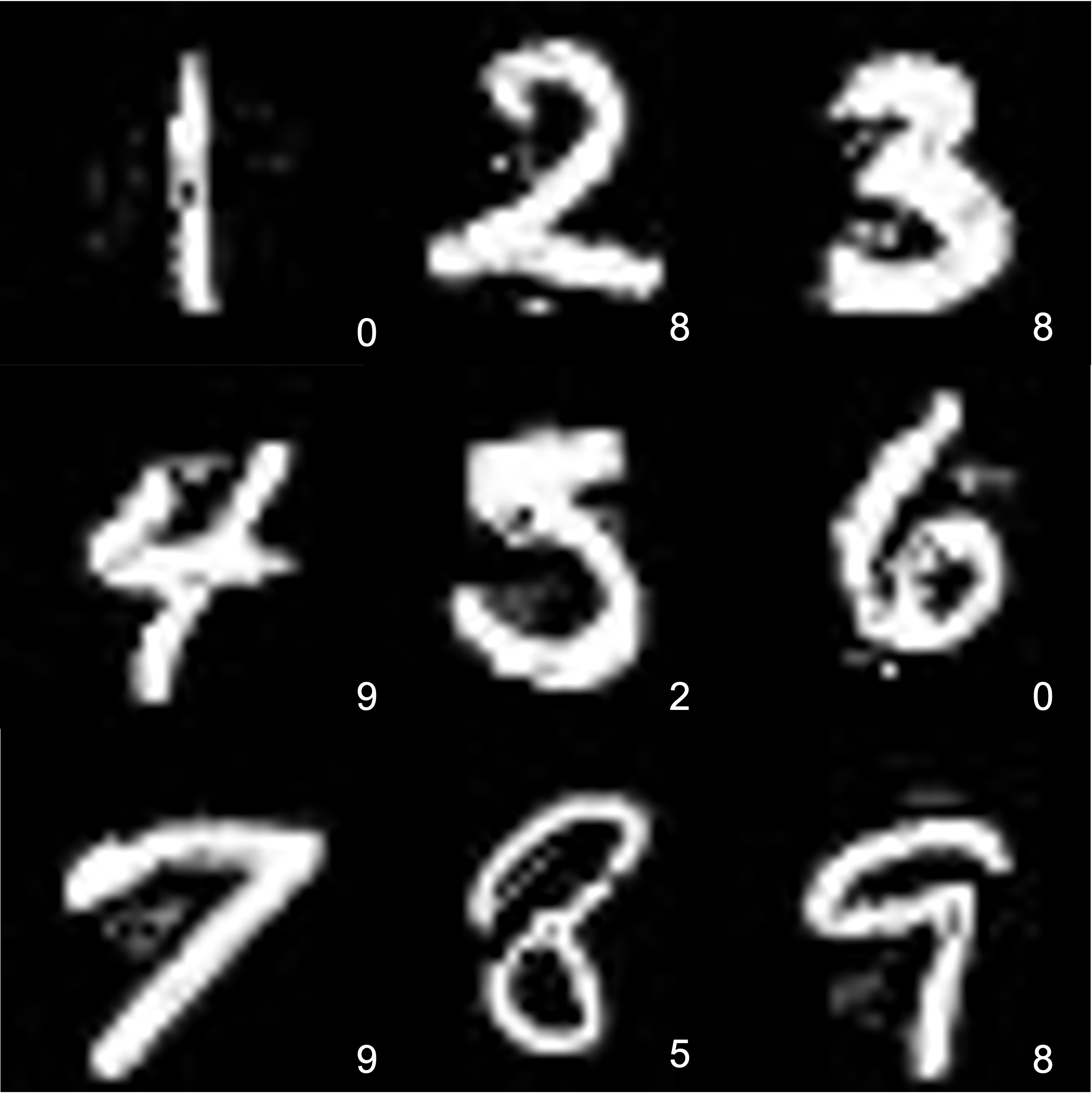}
    \caption{}
\end{subfigure}
\caption{
(a) Original images from the MNIST dataset.
(b) Corresponding adversarial images in the weak threat model,
with perturbations within the exact certified radius (\ie {$\|\boldsymbol{\delta}\| \le \tilde{R}$}) 
but the linear SVM model misclassifies them.
For example, the first {top-left} image in (a) has certified radius of {$\tilde{R}=333.608776491892\emph{5}$},
and is classified as 1, while the corresponding adversarial image in (b) has perturbation 
$\|\boldsymbol{\delta}\|=333.608776491892\emph{4}$, and is classified as 0.
Labels at the bottom right of each image are the classifications of the linear SVM model (Section~\ref{sec:lrmodel}).
}
\label{fig:lr_mnist_demo}
\end{figure}

The results of rounding search attacks on neural nets with approximate certification 
(\ie randomized smoothing) are listed in Table~\ref{tab:rs_res}.

\begin{table}[H]
  \centering
  \caption{
    The attack success rates of our rounding search within the certified radius $\tilde{R}$~($\alpha=0.1\%)$~\cite{cohen2019certified}. 
    The number of Monte Carlo {samples} $M$ used in certified radius estimation ranges in $\{100, 1000, 10000\}$.
    Prediction noise scale $\sigma_P\in\{1.0, 3.0, 5.0, 7.0\}$.
    We randomly sample and evaluate $N=1000$ neighbors $\boldsymbol{\delta}'$ of $\boldsymbol{\delta}$ in the search area for each image.
    $N_A$ and $N_V$ denote the number of examples that have been successfully attacked and verified respectively.
  }
  \renewcommand{\arraystretch}{1.15}
  \begin{tabular}{ccccc}
    \hline
    $\sigma_P$           & $M$ & $N_A$ & $N_V$ & success rate \\ \hline
    \multirow{3}{*}{1.0} & 100   & 5     & 2111  & 0.24\%       \\
                          & 1000  & 1     & 2113  & 0.05\%       \\
                          & 10000 & 1     & 2115  & 0.05\%       \\ \hline
    \multirow{3}{*}{3.0} & 100   & 182   & 2066  & 8.81\%       \\
                          & 1000  & 89    & 2111  & 4.22\%       \\
                          & 10000 & 8     & 2113  & 0.38\%       \\ \hline
    \multirow{3}{*}{5.0} & 100   & 273   & 1606  & 17.00\%      \\
                          & 1000  & 162   & 2097  & 7.73\%       \\
                          & 10000 & 54    & 2109  & 2.56\%       \\ \hline
    \multirow{3}{*}{7.0} & 100   & 220   & 1042  & 21.11\%      \\
                          & 1000  & 248   & 2026  & 12.24\%      \\
                          & 10000 & 99    & 2066  & 4.79\%       \\ \hline
    \end{tabular}
  \label{tab:rs_res}
\end{table}

The results of exploration of flattening success rate phenomenon 
appear in Figure~\ref{fig:random_analysis}.

\begin{figure}[H]
\centering
\begin{subfigure}{0.5\textwidth}
\centering
    \includegraphics[width=0.8\linewidth]{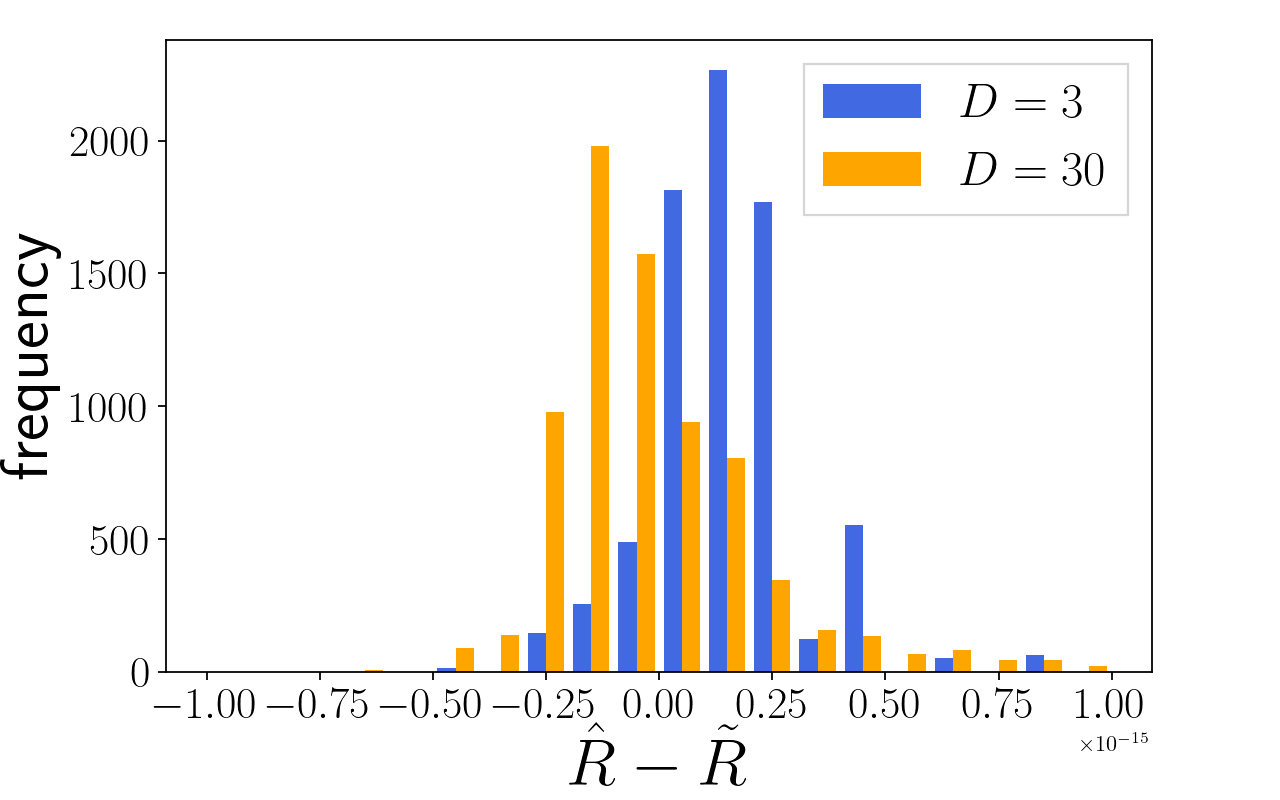}
    \caption{}
    \label{fig:r_error2}
\end{subfigure} \\
\begin{subfigure}{0.5\textwidth}
\centering
    \includegraphics[width=0.8\linewidth]{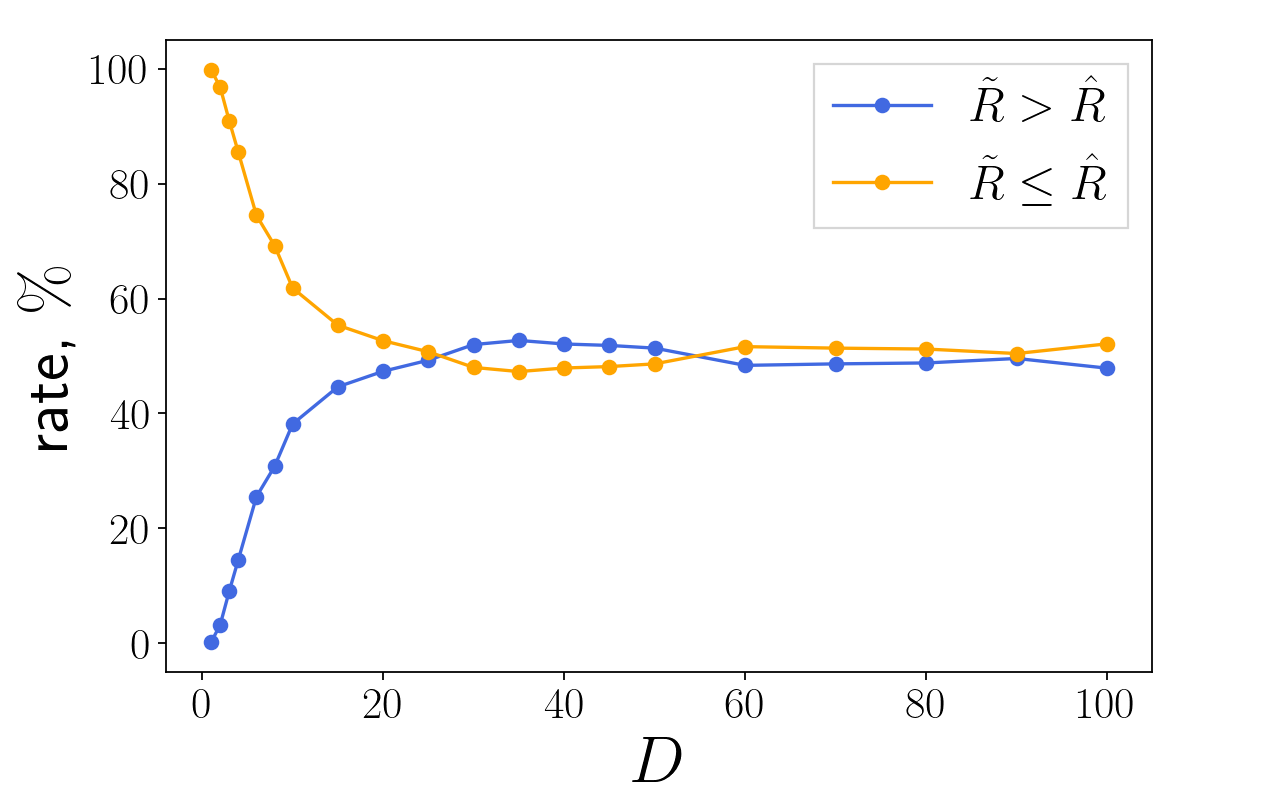}
    \caption{}
    \label{fig:r_over_under}
\end{subfigure}
\caption{
Exploration of flattening success rate phenomenon in Section~\ref{sec:exp:linear}.
The experiment is done in the weak threat model with 64-bit floating-point representation.
$\tilde{R}$ is the estimated certified radius using IEEE 754 floating-point arithmetic (with rounding errors),
$\hat{R}$ is the certified radius found via binary search within $[\underline{R}, \overline{R}]$,
$\underline{R}$ and $\overline{R}$ are the lower and upper bounds of the
real certified radius $R$ estimated using interval arithmetic~\cite{pyinterval}.
$\hat{R}$ is our best approximation to $R$, within which there are no robustness violations.
(a) Deviations ($\hat{R}-\tilde{R}$) between $\tilde{R}$ and $\hat{R}$ 
over 10 000 trials for $D\in\{3,30\}$.
(b) For each dimension $D\in[1,100]$, we plot percentage of 10 000 trials for which $\hat{R}>\tilde{R}$ and $\hat{R}\le\tilde{R}$.
Our attacks may work when $\hat{R}<\tilde{R}$, that is, the certified radius is overestimated.
}
\label{fig:random_analysis}
\end{figure}

\end{document}